\def\ps@headings{%
\def\@oddhead{\mbox{}\scriptsize\rightmark \hfil \thepage}%
\def\@evenhead{\scriptsize\thepage \hfil \leftmark\mbox{}}%
\def\@oddfoot{}%
\def\@evenfoot{}}
\newtheorem{proposition}{Proposition}
\newtheorem{corollary}{Corollary}
\newtheorem{lemma}{Lemma}
\newcommand{\automaton}{\mathcal{PDA}}
\newcommand{\G}{\mathcal{CFG}}
\newcommand{\States}{\mathcal{S}}
\newcommand{\Alphabet}{\Sigma}
\newcommand{\Symbols}{\Gamma}
\newcommand{\Transitions}{\delta}
\newcommand{\startsymb}{Z_0}
\newcommand{\Fstates}{\States_F}
\newcommand{\alphabet}{\mathcal{A}}
\newcommand{\trace}{\mathcal{T}}
\newcommand{\nodes}{\V}
\newcommand{\AD}{\mathcal{F}}
\newcommand{\qos}{q}
\newcommand{\nb}{nb}
\newcommand{\V}{\mathcal{V}}
\newcommand{\E}{\mathcal{E}}
\newcommand{\Hh}{\mathcal{H}}
\newcommand{\Gg}{\mathcal{G}}
\newcommand{\C}{\mathcal{C}}
\newcommand{\N}{\mathcal{N}}
\renewcommand{\P}{\mathcal{P}}
\begin{document}

%

%
%
%

\author{\IEEEauthorblockN{Mohamed~Lamine~Lamali\IEEEauthorrefmark{1}, Nasreddine~Fergani\IEEEauthorrefmark{1}, Johanne Cohen\IEEEauthorrefmark{2}, H\'elia Pouyllau\IEEEauthorrefmark{3}
\IEEEauthorblockA{\IEEEauthorrefmark{1}Nokia Bell Labs. France.}
\IEEEauthorblockA{\IEEEauthorrefmark{2}LRI, Univ. Paris-Sud, CNRS, Universit\'e Paris-Saclay. France.}
\IEEEauthorblockA{\IEEEauthorrefmark{2}Thales Research \& Technology. France.}
}
\url{mohamed_lamine.lamali@nokia.com }\ \ \
\url{johanne.cohen@lri.fr }\ \ \ 
\url{helia.pouyllau@thalesgroup.com}
}

\title{Path computation in multi-layer networks: Complexity and algorithms\thanks{Preprint of the paper accepted for IEEE INFOCOM $2016$.}}

\maketitle

\begin{abstract}
Carrier-grade networks comprise several layers where different protocols coexist. Nowadays, most of these networks have different control planes to manage routing on different layers, leading to a suboptimal use of the network resources and additional operational costs. However, some routers are able to encapsulate, decapsulate and convert protocols and act as a liaison between these layers. A unified control plane would be useful to optimize the use of the network resources and automate the routing configurations. Software-Defined Networking (SDN) based architectures, such as OpenFlow, offer a chance to design such a control plane. One of the most important problems to deal with in this design is the path computation process. Classical path computation algorithms cannot resolve the problem as they do not take into account encapsulations and conversions of protocols. In this paper, we propose algorithms to solve this problem and study several cases: Path computation without bandwidth constraint, under bandwidth constraint and under other Quality of Service constraints. We study the complexity and the scalability of our algorithms and evaluate their performances on real topologies. The results show that they outperform the previous ones proposed in the literature.

\end{abstract}

\begin{IEEEkeywords}
Multi-layer networks; Path computation; Protocol heterogeneity; Unified control plane.
\end{IEEEkeywords}

\section{Introduction}
Carrier-grade networks generally encompass several layers involving different technologies and protocols. To support some services, such as a Virtual Private Network (VPN), a path across network equipments must be identified and the equipments be configured accordingly. Under stringent requirements of Quality of Service (QoS \--- e.g., end-to-end delay, geographic zone avoidance, etc.), computing such a path within a single layer is not always possible. Hence, one of the key challenges is to determine the end-to-end path that uses the appropriate \textit{adaptation functions} over the protocols: The mapping from a protocol to another being realized through \textit{encapsulation} (e.g., Ethernet over IP/MPLS~\cite{RFC4448}), \textit{decapsulation} (the reverse operation) or \textit{conversion} (e.g., IPv4 to IPv6~\cite{RFC6144}) functions. Consequently, the path computation process should take into account the adaptation function capabilities of the network equipments in order to ensure \textit{path feasibility}: If a protocol is encapsulated in another one, it must be decapsulated (or unwrapped) further in the path. If several encapsulations are nested, the corresponding decapsulations must occur in the right order. Here, the multi-layer context should be taken in a broad sense: Presence of several protocols and technologies that can be nested, encapsulated, converted, etc.

Dealing with protocol heterogeneity becomes increasingly important nowadays. In addition to the IPv4/IPv6 migration, this heterogeneity appears in tunneling, some architectures (e.g., The Pseudo-Wire architecture~\cite{RFC3985} allows the emulation \--- and thus the encapsulation \--- of lower layer protocols over Packet-Switched Networks), hybrid networks (e.g., National Research and Education Networks \--- NRENs \--- which may have optical and IP interconnection points), and last but not least, most carrier-grade networks, which have separate control planes for IP and Transport layers. In all these contexts, a unified control plane would be very useful for optimizing the network resources and reduce operational and management costs. 

OpenFlow is a chance to design such a control plane. Some previous works~\cite{das2010packet,liu2013field} present an OpenFlow-based architecture to achieve this challenge, but they only focus on the convergence of packet and circuit networks. Other works tackle the traffic engineering problem in SDNs but circumscribe it on a single layer~\cite{Agarwal2013} or in the IPv4/IPv6 migration context~\cite{Li2014}. However, an important problem to solve remains the path computation process in a multi-layer context. Taking into account the adaptation functions is not trivial and classical algorithms such as Dijkstra's one \cite{Dijkstra1959} cannot achieve the task as they do not handle these functions.

Here, we design several algorithms to compute shortest paths dealing with protocol changes and adaptation functions.
\subsection*{Our contributions:}
\begin{enumerate}
\item We widely generalize the model and the polynomial algorithms described in Lamali~\textit{et~al.}~\cite{Lamali2012, Lamali2013} to perform path computation in multi-layer networks (without bandwidth constraint). Our model takes into account all possible types of protocol changes (encapsulation, conversion, etc.) and any additive metric. We drastically improve the algorithm complexity and realize the first implementation, showing their efficiency on two real topologies.
\item For simulation purposes, we empirically study the distribution of adaptation functions over the network nodes and its impact on feasible path existence. We exhibit a phase transition phenomenon, i.e., a gap where the probability of existence of a feasible path hugely increases. 
\item We prove that path computation in multi-layer networks under bandwidth constraint is $\mathsf{NP}$-complete even with two protocols and on symmetric graphs, thus improving a result of Kuipers and  Dijkstra~\cite{K09}. We also obtain results on the complexity of some subproblems: It is polynomial on \emph{Directed Acyclic Graphs} (DAG) and the general problem is not approximable. We propose a new heuristic to resolve the problem and show its efficiency through simulations. 
\item We propose the first algorithm to perform path computation in multi-layer networks under several QoS constraints by adapting the Self-Adaptive Multiple Constraints Routing Algorithm (SAMCRA \--- Van~Mieghem and Kuipers~\cite{Van04}) to the multi-layer context. We study its scalability through simulations.
\end{enumerate}

The paper is organized as follows: Section~\ref{sec:prob_state} describes the problem of path computation in  multi-layer networks and recalls the related work; Section~\ref{sec:model} formalizes the problem and describes our model of multi-layer network; Section~\ref{sec:without} proposes algorithms to perform path computation without bandwidth constraint and shows their efficiency through simulations, it also studies the phase transition phenomenon in multi-layer networks; Section~\ref{sec:with} studies the complexity  of path computation under bandwidth constraint and proposes heuristic solutions to tackle the problem; Section~\ref{sec:qos2} proposes the first algorithm computing paths under additive QoS constraints and studies its scalability; finally, Section~\ref{sec:conclusion} concludes the paper.

\section{Path computation in Multi-Layer Networks}
\label{sec:prob_state}
\subsection{Connectivity in multi-layer networks}
We aim to present the different concepts of path computation in multi-layer networks through an example. While this example relates to multi-domain multi-layer networks, the underlying problem of path computation is the same as in a single domain network\footnote{The algorithms presented in this paper can be applied in a single-domain or a multi-domain context. For the latter, however, a mechanism for sharing the network information (such as the topology) is needed. This can be done through a PCE for example~\cite{PCE}.}. Figure~\ref{fig:realnet} (inspired by the Inter-Provider Reference Model~\cite{RFC5659}) depicts a network involving multiple domains and  adaptation function capabilities of network equipments: A company owning a Local Area Network (LAN) wishes the Virtual Machines (VMs) of a data-center to be within the same routing domain (for instance through a Layer 2 VPN or a Generic Routing Encapsulation tunnel). Hence, the switches of the LAN and the VMs of the data-center must communicate through Ethernet datagrams and a path has to be determined across the Domains~$1$ and $2$.
\begin{figure*}[htb]
\centering
\includegraphics[width=\textwidth]{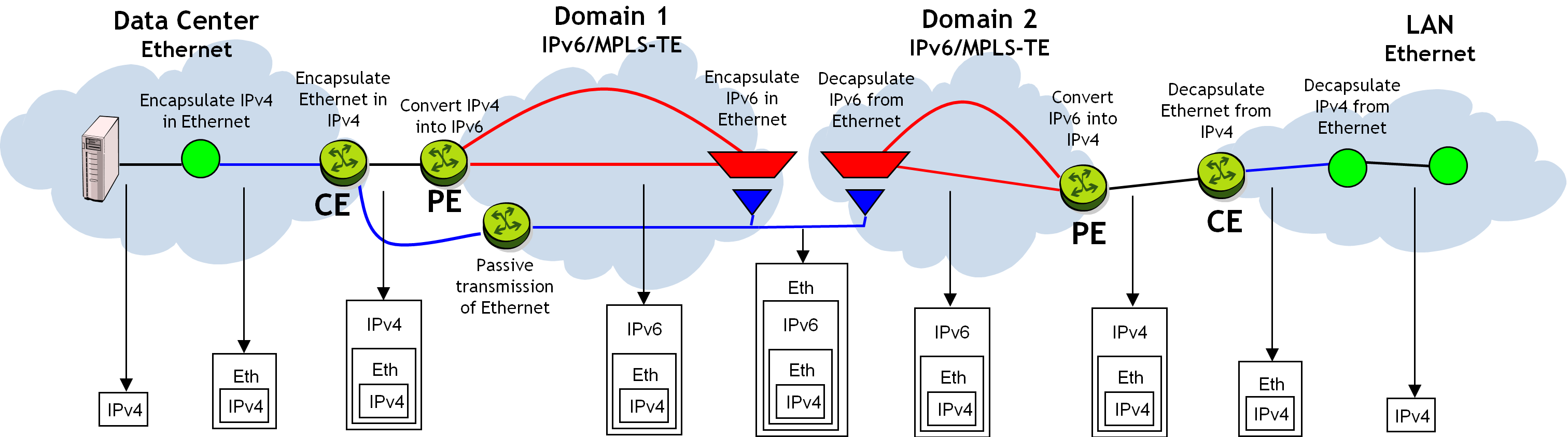}
\caption{Carrier-grade network comprising several domains and different layers.}
\label{fig:realnet}
\end{figure*}

On Figure~\ref{fig:realnet}, Domains~$1$ and $2$ use IPv6/MPLS-TE technology and are linked by equipments providing Ethernet encapsulation and decapsulation. The Provider Edge (PE) of Domain~$1$ is linked to the Customer Edge (CE) of the data-center. The adaptation capabilities of each node are shown above it.
An example of feasible path would cross the PE of Domain~$1$ converting IPv4 packets into IPv6 ones. Then it would apply the encapsulation and decapsulation of the border routers of Domains~$1$ and $2$ respectively and the PE of Domain~$2$ would apply a conversion of IPv6 packets into IPv4 ones. The protocol stacks of the packets at each stage are illustrated at the bottom of Figure~\ref{fig:realnet}. As an example of unfeasible path, a direct Ethernet connection between the CE of the data-center and the border router of Domain~$1$ appears. This configuration leads to a decapsulation of an IPv6 packet from an Ethernet datagram (by the border router of Domain~$2$) whereas at this stage the datagram encapsulates IPv4 packets.

This example depicts the constraints to comply with when computing a multi-layer (and multi-domain, in this case) path: Being physically linked is not sufficient to establish connectivity. Protocol continuity (by analogy with wavelength continuity in optical networks) must hold and the adaptation functions should occur in the right order. Moreover, feasible paths can involve loops and their subpaths are not necessary feasible~\cite{K09,Dijkstra2009}. Nowadays, such paths are manually determined and configurations are operated and applied by scripts.
\subsection{Related work}
The initial works dealing with protocol and technology heterogeneity circumscribed the problem at the optical layer. For instance, Chlamtac~\textit{et~al.}~\cite{Chlamtac1996} described a model and algorithms to compute a path under wavelength continuity constraints. Zhu~\textit{et~al.}~\cite{Zhu2003} addressed the same problem in WDM mesh networks tackling traffic grooming issues. In~\cite{Gong2008}, Gong and Jabbari provided an algorithm to compute an optimal path under constraints on several layers: wavelength continuity, label continuity, etc. 

However, the models of these past works are not adapted to the problem of nested encapsulation and decapsulation capabilities for which a kind of stack mechanism is needed. In~\cite{Dijkstra2008}, Dijkstra~\textit{et~al.} addressed this issue  in the context of the ITU-T G.805 recommendations on adaptation functions. They stressed the lack of solutions on path computation. Kuipers and Dijkstra~\cite{K09} demonstrated that the problem of path computation with encapsulation and decapsulation capabilities is $\mathsf{NP}$-complete under bandwidth constraint. They proposed a Breadth-First Search (BFS) algorithm that explores all possible paths until finding a feasible one. In~\cite{Lamali2012,Lamali2013}, Lamali~\textit{et~al.} demonstrated that the problem is polynomial if the bandwidth constraint is relaxed. Their approach was to model the network as a Push-Down Automaton and to use automata and language theory tools to compute a shortest feasible path, but only considering the number of hops or adaptation functions. More recently, Iqbal~\textit{et~al.}~\cite{Iqbal2015} underlined the need of path computation algorithms in NRENs. They proposed a new matrix-based model for multi-layer networks and algorithms based on $k$-shortest paths and \textit{LOOK-AHEAD} methods. However, the model deals with technologies\footnote{A \textit{technology} is an exhaustive description of the protocol stack at some node, e.g.,  IP over Ethernet over ATM.} instead of protocols. Thus, the nested protocols are not transparent to the nodes. Moreover, the proposed exact algorithm is exponential and can compute only loopless feasible paths.
\subsection{Proposed approach}
Our goal is to study the path computation problem in a multi-layer context and to propose efficient algorithms to resolve it. To this end, we focus on three cases: Path computation without bandwidth constraint (by adapting the language theoretic approach of Lamali\textit{~et~al.}~\cite{Lamali2013}), under bandwidth constraint (by using graph transformation in order to overcome the problem complexity) and under several QoS constraints. The simulations showing the efficiency of our algorithms follow a methodology based on the probabilistic distribution of the adaptation functions over the nodes.
\section{Model and problem formalization}
\label{sec:model}
This section describes a mathematical model of multi-layer networks and formalizes the notion of path feasibility. 
\subsection{Multi-layer network model}
\noindent{\bf Notation convention.}  In order to avoid confusion, lowercase letters denote protocols (e.g., $a,b,c,x,y$) or functions (e.g., $f,h,\ell$). Capital letters denote nodes and links (e.g., $U,V,E$). Finally, calligraphic letters denote sets (e.g., $\Gg,\V,\E$).
\label{sec:graph_model}

We consider a multi-layer network as a $4$-tuple $\N=(\Gg, \alphabet, \AD, h)$ where:
\begin{itemize}
\item $\Gg=(\V,\E)$ is a directed graph modeling the network topology. The set of nodes $\V$ models the routers of the network. The set of edges $\E$ models the physical links between the routers. 
\item $\alphabet=\{a,b,c, \dots \}$ is the set of protocols available in the network, but not necessarily at each router.
\item For each node $U\in\nodes$, $\AD(U)$ is the set of adaptation functions available on node $U$. These functions are: 
\begin{itemize}
\item \textit{Conversion}: A protocol $a$ is converted into a protocol $b$ without any change of the possible underlying protocols. This function is denoted by $(a\rightarrow b)$. E.g., Wavelength conversion on the optical layer, IPv4 to IPv6, etc.
\item \textit{Passive function}: A protocol $a$ is left as it is. It is a classical retransmission without any protocol change and can be considered as a special case of protocol conversion where $a=b$. Thus it is denoted by $(a\rightarrow a)$.
\item \textit{Encapsulation}: A protocol $a$ is encapsulated in a protocol $b$. It is denoted by $(a\rightarrow ab)$. 
\item \textit{Decapsulation}: A protocol $a$ is decapsulated from a protocol $b$. It is denoted by $\overline{(a\rightarrow ab)}$.
\end{itemize}
\item $h : \V\times\AD\times \V \rightarrow \Re_+$ is the \textit{weight} function. The value $h(U,f,V)$ (where $U,V\in \V$ and $f\in\AD(U)$ ) is the cost of using the link $(U,V)$ with the adaptation function $f$ on $U$. Hence, function $h$ allows representing any additive metric either associated only to the links or to both links and adaptation functions.
\end{itemize}

\subsection{Path feasibility}
\label{sec:path_feas}
Let $(S,D)$ be a pair of nodes in $\Gg$ corresponding to the source and the destination of the path to be computed.
We consider a path from $S$ to $D$ as a sequence of nodes and adaptation functions $Sf_0U_1f_1U_2f_2\dots U_nf_nD$ where each $U_i$, $i=1,\ldots,n$, is a node and each $f_i$ is an adaptation function ($f_0$ being fictitious). A path is \textit{feasible} if: 
\begin{enumerate}
\item The sequence $SU_1U_2\dots U_nD$ is a path in $\Gg=(\nodes,\E)$ and each $f_i \in \AD(U_i)$;
\item Each encapsulated protocol is decapsulated before reaching $D$ according to its encapsulation order and \textit{protocol continuity} must hold (i.e., if the sequence contains a function $f_i$ s.t. $f_i=(a \rightarrow b)$, $a,b \in \alphabet$, then $f_{i+1}=(b \rightarrow a')$ or $f_{i+1}=(b \rightarrow b a')$ or $f_{i+1}=\overline{(a'\rightarrow a'b)}$, $a'\in \alphabet$).
\end{enumerate}
Actually, the protocol sequences of feasible paths can be characterized as a well-parenthesized language~\cite{Lamali2013}.

\section{Path computation without bandwidth constraint}
\label{sec:without}
This section proposes a polynomial algorithm to resolve the path computation problem without bandwidth constraint and evaluates it through simulations.
\subsection{A polynomial algorithm for path computation}
Lamali~\textit{et~al}.~\cite{Lamali2013} proposed a language theoretic approach to compute a shortest feasible path (involving encapsulations and decapsulations of protocols) in a multi-layer network. The metric considered was the number of hops or of encapsulations in the path. The approach comprises the following steps:
\begin{enumerate}
\item Consider the set of protocols as an alphabet and convert the multi-layer network into a Push-Down Automaton (PDA);
\item If the considered metric is the number of encapsulations, transform the automaton in order to bypass passive transitions;
\item Convert the PDA to a Context-Free Grammar (CFG);
\item \label{it:shortest}Compute the shortest word generated by the CFG. It is the protocol sequence of a shortest path;
\item \label{it:path}Compute a shortest path from this sequence.
\end{enumerate}
We made several improvements to these algorithms:

\begin{itemize}
\item The PDA building is modified in order to support protocol conversion by adding a new transition type;
\item The PDA transitions are weighted in order to reflect the weight function. Thus, our algorithm computes the shortest path according to any additive metric (instead of just the number of hops or encapsulations); 
\item The PDA transformation is no longer useful thanks to the weight function: Simply put $h(U,f,V)=1$ (where $U,V\in \V$ and $f\in\AD(U)$) for all triples where $f$ is an encapsulation, and $h(U,f,V)=0$ for all other triples. It is also possible to set different weights to each type of encapsulation and minimize the path cost according to these weights;
\item The conversion of the PDA into a CFG is adapted: As in~\cite{Lamali2013}, each transition from the PDA is converted into a production rule set in the CFG according to a method described in~\cite{Hop06}. However, the transition weights are assigned to the corresponding production rules;
\item Step~\ref{it:shortest} is different: Since the production rules are weighted, the goal is no longer to compute the shortest word but the word having the minimum weight derivation tree. This is done thanks to Knuth's algorithm described in~\cite{Knuth77}. This word corresponds to the protocol sequence of a shortest path to compute;
\item The algorithm computing the path matching the protocol sequence is modified in order to take into account the weights.
\end{itemize}
Due to the lack of space, we cannot detail our improved algorithm. The interested reader can find it (together with its correctness proof and complexity study) in Appendix~\ref{appendix:algos}.

Additionally to these improvements, the algorithm complexity is drastically decreased. In~\cite{Lamali2013},  Step~\ref{it:shortest} has a complexity of $O(|\alphabet|^8\times|\V|^7)$ in the worst case, which is the highest complexity in the whole process.
Implementing Knuth's algorithm with Fibonacci heaps gives an $O(|\mathcal{Q}|\log|\mathcal{Q}|+\mathcal{|R|})$ complexity, where $|\mathcal{Q}|$ is the number of nonterminals in the CFG and $|\mathcal{R}|$ is the number of production rules~\cite{Tarjan1987}. Since $|\mathcal{Q}|=O(|\alphabet|^3\times|\V|^2)$ and $|\mathcal{R}|=O(|\alphabet|^5\times|\V|^2\times|\E|)$ (see Appendix~\ref{appendix:algos}), the complexity of the whole process is:
\[O\left(|\alphabet|^5\times|V|^2\times|\E|\right) \] This is a significant improvement compared to  the complexity  $O(|\alphabet|^8\times|\V|^7)$ in~\cite{Lamali2013}.

\subsection{Simulations}
\label{sec:simul_without}
We implemented our algorithm (called PDA) and compared it to a classical BFS approach.

\subsubsection{Networks used for the simulations and methodology}
Large multi-layer topologies are generally not available. Some public ones as the Internet2 network~\cite{Internet2} are not large enough to show the scaling of our algorithm. Thus we performed simulations on two topologies described in~\cite{Mahajan2002}:
\begin{itemize}
\item Topology~$T1$ is a simplified version of Time Warner network. It has $41$ nodes and $296$ directed links.
\item Topology~$T2$ corresponds to the network of Exodus as in 2002. It has $79$ nodes and $294$ directed links.
\end{itemize}

Since these topologies are not layered, the adaptation functions are randomly allocated to the nodes. For an alphabet~$\alphabet$, there are $3|\alphabet|^2$ possible adaptation functions (for each ordered pair of protocols: a conversion, an encapsulation and a decapsulation). For each node $U$, each of these adaptation functions is available on $U$ with probability $p$. The source and the destination nodes are the diameter extremities, which corresponds to $5$ (resp. $10$) hops for Topology~$T1$ (resp. $T2$).

\subsubsection{Phase transition in path feasibility}
\label{sec:trans_phase}
Depending on the network topology and the adaptation function distribution, there is not always a feasible path. It is interesting to know the probability of a feasible path existence according to probability $p$ in order to set appropriate parameters for the simulations. In case of path existence, knowing the probability that the shortest one involves loops allows comparing the different algorithms (some of them allow loops and others do not). To compute this probability, we performed $200$ runs for each value of $p$ and counted the number of times there was a feasible path.

Figure~\ref{fig:phase_trans} shows the evolution of feasible path existence probability according to $p$ and the proportion of shortest paths that involve loops. Not surprisingly, the probability of feasible path existence grows according to $p$. On both topologies, the probability of path existence reaches $50\%$ when $p=0.22$ and follows a phase transition phenomenon. For example, in the interval $p\in[0.10, 0.38]$, the probability of path existence in Topology~$T1$ grows from $5\%$ to $90\%$. This interval is the most suitable to perform simulations. The phase transition phenomenon also holds with more than $2$ protocols. The more the number of protocols is high, the more the phase transition is shifted to the left.
If there are few feasible paths (for small $p$), the probability that the shortest ones involve loops is high. However, this probability quickly decreases. For example, for $p>18\%$, the proportion of shortest paths involving loops is less than $20\%$ in Topology~$T1$. The trend of this proportion is not clear in $T2$, however it is less than $21\%$ if $p>0.22$.
\begin{figure} 
\centering
\includegraphics[width=0.45\textwidth]{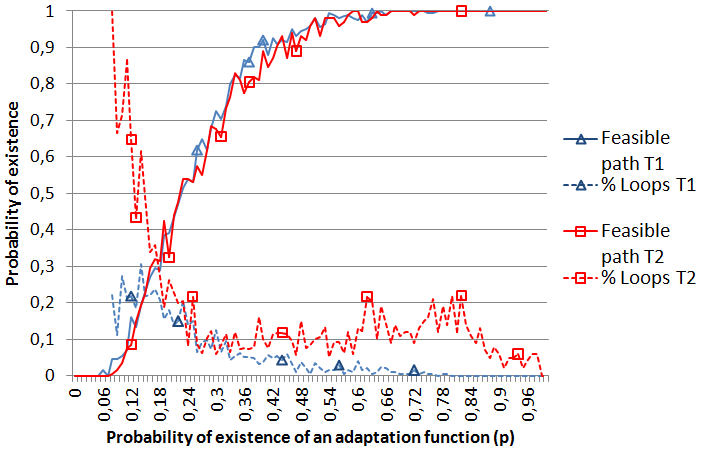}
\caption{Probability of existence of a feasible path (and a loop in the shortest one) according to the probability of existence of an adaptation function.}
\label{fig:phase_trans}
\end{figure}
The phase transition phenomenon can be seen in~\cite{Iqbal2015}. But the results consider only loopless paths and the distribution deals with technologies rather than adaptation functions. 

\subsubsection{Simulation results}
\label{sec:simul_without_bandwidth}
Our algorithm is compared to a classical BFS which explores all possible paths until reaching the destination. During the exploration process, all \textit{dominated}\footnote{In this context, a path dominates another one if they have the same extremities and the same protocol stack, and the first path is shorter.} paths are deleted. BFS can be seen as a version of the algorithm in~\cite{K09} where the bandwidth constraint is relaxed. The first results showed that BFS algorithm is extremely slow even for small values of $p$ (processing time of the order of several hours). It was impossible to perform a comparison with our algorithm. Due to this tremendous running time, we fixed a maximum length to the explored paths by BFS algorithm. If a path exceeds $10$ hops (resp. $14$ hops) on Topology~$T1$ (resp. $T2$), it is deleted and no more considered. We performed $100$ runs for each value of $p$ and averaged the processing time.
\begin{figure} 
\centering
\includegraphics[width=0.45\textwidth]{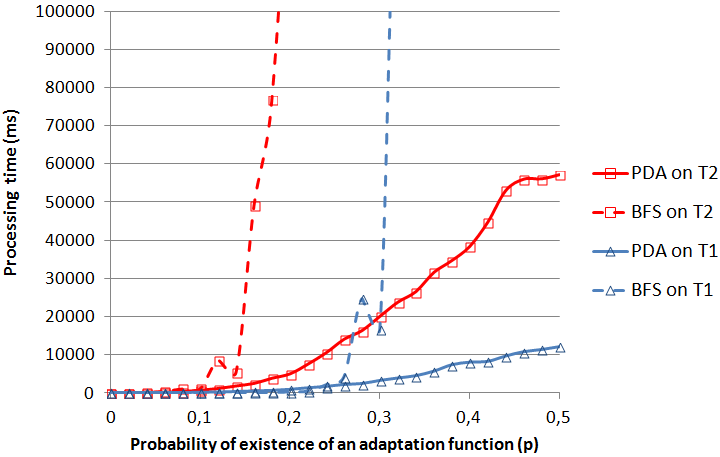}
\caption{Comparison of processing time of PDA algorithm and BFS on Topologies $T1$ and $T2$.}
\label{fig:pda_vs_bfs}
\end{figure}
Figure~\ref{fig:pda_vs_bfs} shows the processing time of PDA algorithm and BFS algorithm on Topologies $T1$ and $T2$ according to the values of $p$. For small values of $p$ ($<0.22$ for $T1$ and $<0.04$ for $T2$) BFS algorithm is faster than PDA. However, the processing time of BFS explodes. We cannot put it on Figure~\ref{fig:pda_vs_bfs} because it would be unreadable. For example, the processing time of BFS algorithm on Topology~$T2$ for $p=0.24$ is more than $14$ minutes, while that of PDA algorithm is $10$ seconds. On Topology~$T1$, for $p=0.38$, the processing time of BFS algorithm is more than $7$ minutes, while that of PDA algorithm is $7$ seconds.  These results show that our algorithm clearly outperforms the BFS approach.
\section{Addressing bandwidth constraint}
\label{sec:with}
This section studies the complexity of path computation under bandwidth constraint and proposes heuristic solutions to resolve the problem.
\subsection{Problem formalization}
For Traffic Engineering purposes, a feasible path may be constrained by a minimal bandwidth. But it is possible that feasible paths in a multi-layer network involve loops (i.e., involving the same link several times but using different protocols). It implies that the bandwidth constraint is no longer prunable: Even if the links with not enough bandwidth are deleted by topology filtering prior to path computation, other links can have enough bandwidth if they are selected once but not if more. For example, if a link has a capacity of $10$Gbps and the bandwidth constraint is $5$Gbps, then this link cannot be crossed more than twice. The (optimization) problem of computing the shortest path in a multi-layer network under bandwidth constraint is defined as follows:
\begin{equation}
\label{def:BFP}
\begin{split}
\min &\ h(\P)=\sum_{(U,f,V)\in\P}h(U,f,V)\\
 s.t.& \left\lbrace\begin{array}{l}
\P \text{ is a feasible path between $S$ and $D$}\\
 \\
\min_{E\in \P}\dfrac{\qos_b(E)}{\nb(E)}\geq \qos_b^{min} \\
\end{array} \right.
\end{split}
\end{equation}
where $nb(E)$ is the number of times a link $E$ is crossed by path $\P$, $\qos_b(E)$ is the bandwidth capacity of $E$ and $\qos_b^{min}$ is the bandwidth constraint.
  
\subsection{Path computation complexity under bandwidth constraint}
The bandwidth constraint impacts the complexity of feasible path computation. In a single-layer network, computing a path under bandwidth constraint is trivial: It suffices to prune all the links without enough bandwidth. This is no longer possible in a multi-layer network. In fact, the decision problem is $\mathsf{NP}$-complete as shown by Kuipers and Dijkstra~\cite{K09}. But this proof does not work on symmetric directed graphs\footnote{A symmetric directed graph is a graph where a link $(U,V)$ exists if and only if the reverse link $(V,U)$ exists.}. However, most communication networks are symmetric. We show that the decision version of the problem remains $\mathsf{NP}$-complete even with two protocols and in a symmetric graph. Consider the following problem:

{\bf Problem (\ref{def:BFP}')}. Given a multi-layer network $\N=(\Gg=(\V,\E), \alphabet, \AD,h)$, a function  assigning to each link $E\in\E$ an available bandwidth $\qos_b(E)$, a bandwidth constraint $\qos_b^{min}$ and a pair $S$ and $D$ of nodes in $\V$. Is there a feasible path from $S$ to $D$ satisfying the bandwidth constraint?

\begin{proposition}
\label{prop:NPsym}
Problem~(\ref{def:BFP}') is $\mathsf{NP}$-complete with two protocols even if $\Gg=(\V,\E)$ is a symmetric directed graph.
\end{proposition}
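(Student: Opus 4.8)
The plan is to establish both membership in $\mathsf{NP}$ and $\mathsf{NP}$-hardness, the latter through a reduction from \textsc{3-SAT} that outputs a \emph{symmetric} directed graph over an alphabet of only two protocols, so that the restriction claimed in the proposition is respected.

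For membership, I would exploit the bandwidth inequality itself. Since $\min_{E\in\P}\qos_b(E)/\nb(E)\geq\qos_b^{min}$ forces $\nb(E)\leq\lfloor\qos_b(E)/\qos_b^{min}\rfloor$ for every link $E$, any bandwidth-feasible path has length at most $\sum_{E\in\E}\lfloor\qos_b(E)/\qos_b^{min}\rfloor$, which is polynomial whenever the capacities are polynomially bounded (in particular for the constant capacities used in the reduction below). Such a path is therefore a polynomial-size certificate, and both feasibility (protocol continuity plus the correct nesting of encapsulations, verifiable with a single stack pass) and the bandwidth inequality can be checked in polynomial time. Hence the problem lies in $\mathsf{NP}$.

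For hardness, I would start from a \textsc{3-SAT} instance with variables $x_1,\dots,x_n$ and clauses $C_1,\dots,C_m$ and build $\N$ with $\alphabet=\{a,b\}$. The guiding idea is to make a feasible path encode a truth assignment on the stack and then verify each clause against it, using the encapsulation and decapsulation functions over the two protocols to realise a well-parenthesized (LIFO) matching, and using the link capacities to force the path to commit irrevocably to each value. Concretely, each variable gadget would offer two alternative sub-paths (a ``true'' branch and a ``false'' branch) sharing a link whose capacity is set so that $\nb(E)$ cannot exceed the budget, which prevents traversing both branches and thus forces one consistent choice per variable. Each clause gadget would then be reachable only if one of its literals matches the symbol sequence pushed by the corresponding variable choice, the match being enforced jointly by protocol continuity and the decapsulation that unwinds the stack. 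The weight function $h$ is irrelevant here and can be set to $0$, since we only treat the decision version. A satisfying assignment would yield a bandwidth-respecting feasible path, and conversely any such path, being driven into a single choice per variable and a consistent clause verification, would induce a satisfying assignment.

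The hard part will be the symmetry requirement, which is exactly where the proof of Kuipers and Dijkstra~\cite{K09} breaks down: in a symmetric graph every link $(U,V)$ carries its reverse $(V,U)$, so the path may backtrack and ``undo'' or re-read a committed choice, defeating the gadgets. The crux is therefore to design the variable and clause gadgets so that any use of a reverse link either violates protocol continuity (the top-of-stack symbol does not admit the adaptation function at the other endpoint) or exhausts the bandwidth budget on a shared link, so that no cheating traversal survives. A secondary difficulty is that only two protocols are available, forcing the stack to encode the whole assignment as a binary string over $\{a,b\}$; arranging the order of the variable and clause gadgets so that each clause is verified before its witnessing symbols are popped, while respecting the LIFO discipline and the limited re-reads permitted by the capacities, is the delicate combinatorial step. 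Once these gadgets are shown to be tamper-proof under symmetry and two protocols, the equivalence between satisfying assignments and bandwidth-feasible feasible paths, combined with membership, yields the proposition.
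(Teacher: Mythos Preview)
Your write-up is a plan, not a proof: the gadgets are described only by their intended effect, and you yourself flag the two places where the argument has to be made but is not. Both gaps are real.

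The first gap is structural. Encoding the assignment $x_1,\dots,x_n$ as a binary word over $\{a,b\}$ on the stack and then ``verifying'' each clause against it asks for random access to stack positions $i_1,i_2,i_3$ (the indices of the literals in the clause), but decapsulation only ever exposes the top symbol. To read $x_i$ you must first pop $x_n,\dots,x_{i+1}$, and if you want to reuse them for later clauses you must push them back in the same order. With only two protocols and with link capacities restricting re-traversals, there is no evident way to stage these pop/re-push cycles for $m$ arbitrary clauses while keeping the construction polynomial and tamper-proof under symmetry. Your sentence ``arranging the order of the variable and clause gadgets \dots is the delicate combinatorial step'' is precisely the step that is missing, and it is not clear it can be carried out at all along these lines.

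The paper sidesteps this obstacle entirely by reducing from Hamiltonian path in symmetric directed graphs rather than from \textsc{3-SAT}. The stack is used not as storage for an assignment but as a \emph{unary counter}: a tail of $n{+}1$ nodes pushes $n$ copies of $a$ and then one $b$; each original node $U'$ is replaced by a four-node gadget $U_1U_2U_3U_4$ whose only feasible traversal pops exactly one $a$; all capacities are $1$ with constraint $1$, so every link is usable at most once. Feasibility then forces the path to visit every gadget (to empty the counter) and forbids revisits (by capacity), which is exactly Hamiltonicity. Symmetry is absorbed by the gadget: the adaptation functions on $U_1,\dots,U_4$ make the reverse links unusable in any order other than the intended in--out sweep, so no separate anti-backtracking argument is needed. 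This ``counter'' use of the stack avoids the LIFO random-access problem that blocks your \textsc{3-SAT} route.

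A smaller remark on membership: your certificate-length bound $\sum_E\lfloor \qos_b(E)/\qos_b^{min}\rfloor$ is polynomial only if capacities are polynomially bounded (e.g.\ encoded in unary). If capacities are part of the input in binary, this bound can be exponential in the input size, and you would need a different argument (or an explicit encoding assumption) for $\mathsf{NP}$ membership.
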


\begin{proof}
Clearly, the problem is in $\mathsf{NP}$. Thus, we only detail the proof of $\mathsf{NP}$-hardness.

First consider the problem of finding a Hamiltonian path in a symmetric directed graph between two nodes $S'$ and $D'$. Call this problem SYM-HAM. SYM-HAM is $\mathsf{NP}$-complete (for a detailed proof, see Appendix~\ref{app:np-complete}).

Now we provide a polynomial reduction from SYM-HAM to Problem~(\ref{def:BFP}') restricted to symmetric directed graph and two protocols. Given an instance of SYM-HAM, i.e., a symmetric directed graph $\Hh=(\V',\E')$ and a pair of nodes $(S',D')$, we build an instance of Problem~(\ref{def:BFP}'), i.e., a network $\N=(\Gg, \alphabet, \AD, h)$ and a pair of nodes $(S,D)$ as following:

\noindent{\bf Step~1: Splitting the nodes.} For each node $U'\in \V'$, four nodes $U_1,U_2, U_3$ and $U_4$ are created in $\Gg$. Links $(U_i,U_{i+1})$ and $(U_{i+1},U_{i})$ are created for $i=1,\dots,3$. For each link $(U',V')\in \E'$, a link $(U_1,V_1)$ is created in $\Gg$. This step is illustrated on Figure~\ref{fig:reduc_nodes_sym}.

\noindent{\bf Step~2: Adding a tail.} $\Gg=(\V,\E)$ is augmented by a set $\C=\{C_0,\dots, C_{n+1}\}$ of nodes ($n=|\V'|$), where $C_0=S$ is the source node. There are a link $(C_i,C_{i+1})$ and a link $(C_{i+1},C_i)$ for $i=0,\dots,n$. Moreover, there is also a link from $C_{n+1}$ to $S_1$ (the first node resulting from the splitting of $S'$) and conversely. Figure~\ref{fig:reduc_graph_sym} shows this construction. Finally, two nodes $X$ and $D$ are added, as well as the four links $(D_1,X),(X,D_1),(X,D)$ and $(D,X)$ (recall that $D_1$ is the first node resulting from the splitting of $D'$, see Step~$1$).

\noindent{\bf Step~3: Allocating the adaptation functions and available bandwidth.} All the links have available bandwidth $1$. The bandwidth constraint is set to $1$. Thus, any feasible path must cross a link at most once. There is no possible loop. 
Let the set of protocols be $\alphabet=\{a,b\}$. Node $S$ emits packets of protocol $a$. For $i=1\dots,n$, each node $C_i$ in the tail can encapsulate protocol $a$ in itself. Node $C_{n+1}$ can only encapsulate $a$ in $b$. For each node $U'\in\V'$, node $U_1$ can encapsulate any protocol in $b$. Node $U_2$ can either decapsulate protocol $b$ from itself or passively transmit protocol $a$. Node $U_3$ can either decapsulate protocol $a$ from $b$ or passively transmit protocol $a$. Node $U_4$ is able to decapsulate protocol $a$ from itself. Finally, node $X$ can decapsulate protocol $a$ from $b$. Table~\ref{tab:functions} summarizes the allocation of the adaptation functions.

\begin{figure*}[htb]
\centering
\includegraphics[width=0.65\textwidth]{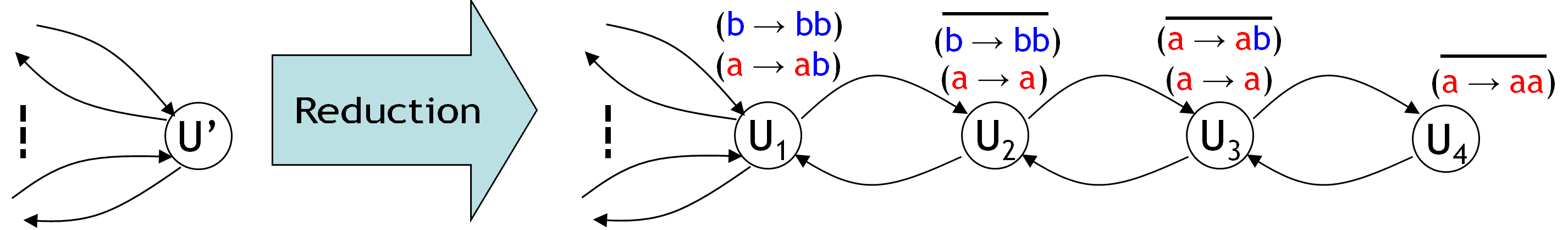}
\centering
\caption{Reduction from SYM-HAM to feasible path under bandwidth constraint (node splitting).}
\label{fig:reduc_nodes_sym}
\end{figure*}

\begin{figure*}[htb]
\centering
\hspace{10mm}
\includegraphics[width=0.98\textwidth]{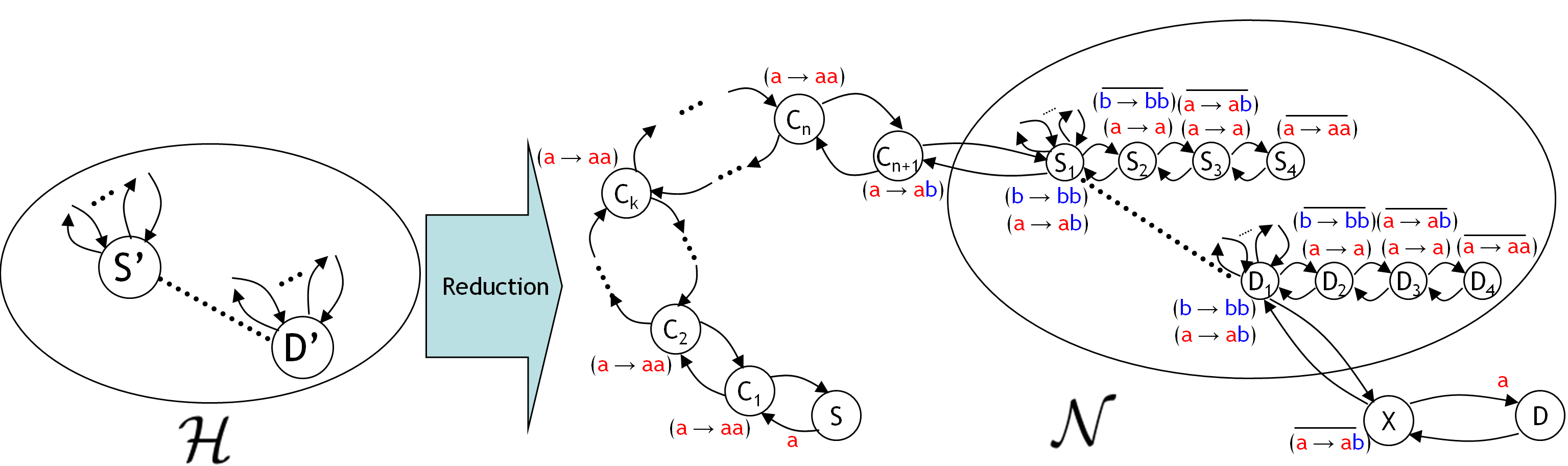}
\caption{Reduction from SYM-HAM to feasible path under bandwidth constraint (graph transformation).}
\label{fig:reduc_graph_sym}
\end{figure*}

\begin{table}
	\centering
	\renewcommand{\arraystretch}{1.7}
		\begin{tabular}{|l|l|}
		\hline
			Node & Adaptation functions \\
			\hline
			$C_i,\ i=1\dots n$  & $(a\rightarrow aa)$ \\
			\hline
			$C_{n+1}$  & $(a\rightarrow ab)$ \\
			\hline
			$U_1\ s.t.\ U'\in\V'$  & $(b\rightarrow bb)$, $(a\rightarrow ab)$ \\
			\hline			
			$U_2\ s.t.\ U'\in\V'$  & $\overline{(b\rightarrow bb)}$, $(a\rightarrow a)$ \\
			\hline			
			$U_3\ s.t.\ U'\in\V'$  & $\overline{(a\rightarrow ab)}$, $(a\rightarrow a)$ \\
			\hline			
			$U_4\ s.t.\ U'\in\V'$  & $\overline{(a\rightarrow aa)}$ \\
			\hline			
			$X$  & $\overline{(a\rightarrow ab)}$ \\
			\hline				
		\end{tabular}
	\caption{The adaptation functions available on the nodes in the polynomial reduction.}
	\label{tab:functions}
\end{table}

Now, we prove that there is a Hamiltonian path from $S'$ to $D'$ in $\Hh$ if and only if there is a feasible path from $S$ to $D$ in $\N$ that satisfies the bandwidth constraint.
First, assuming that there is a Hamiltonian path from $S'$ to $D'$ in $\Hh$, we construct a feasible path $\P$ in $\N$ as follows: Starting from $S$ in $\N$, $\P$ crosses the tail and each $C_i$ ($i=1\dots n$) adds an occurrence of protocol $a$ in the stack of encapsulated protocols. Then crossing $C_{n+1}$ adds $b$ as current protocol. Thus, at the end of the tail, there are $n+1$ encapsulated protocols $a$ (the one emitted by $S$ and $n$ occurrences added in the tail) and the current protocol is $b$. Following the same node order as in the Hamiltonian path, replace each occurrence of a node $U'\in\V'$ (including $S'$ and $D'$) in the Hamiltonian path by the sequence:
\begin{equation}
\label{eq:seq}
\begin{split}
& U_1(b\rightarrow bb) U_2 \overline{(b\rightarrow bb)} U_3 \overline{(a\rightarrow ab)} U_4 \overline{(a\rightarrow aa)} U_3 (a\rightarrow a)\\
&  U_2 (a\rightarrow a) U_1 (a\rightarrow ab)
\end{split}
\end{equation}
Thus, at node $U_1$ an encapsulation of protocol $b$ occurs, at $U_2$ protocol $b$ is decapsulated, at $U_3$ it is decapsulated again, and at $U_4$ protocol $a$ is decapsulated. Path $\P$ then crosses passively nodes $U_3$ and $U_2$, and finally encapsulates protocol $b$ at $U_1$. Thus, at each time the path crosses a Sequence~\eqref{eq:seq}, then one occurrence of protocol $a$ is removed from the protocol stack. Crossing all $U_4$ s.t. $U'\in\Hh$ removes all encapsulated occurrences of protocol $a$ except the first one. When the path  leaves $D_1$ to reach node $X$, the current protocol is $b$ and there is a last occurrence of protocol $a$ which is encapsulated. Finally, node $X$ decapsulates protocol $a$ from protocol $b$ and node $D$ receives protocol $a$ as emitted by $S$. Thus, $\P$ is a feasible path, and each link is crossed at most once, the bandwidth constraint is satisfied. 

Conversely, we show that from any feasible path $\P$ satisfying the bandwidth constraint in $\N$, one can extract a Hamiltonian path between $S'$ and $D'$ in $\Hh$. A feasible path must cross all nodes $U_4$ s.t. $U'\in\V'$ in order to decapsulate all occurrences of protocol $a$ encapsulated when crossing the tail. Thus, it involves Sequence~\eqref{eq:seq} for all $U'\in \V'$. By removing the tail part and the nodes $X$ and $D$ from $\P$ and replacing each occurrence of Sequence~\eqref{eq:seq} by the corresponding node $U'$, the resulting path starts from $S'$ and crosses all the nodes in $\Hh$ before reaching $D'$. The only problem is the possibility that there are other sequences than Sequence~\eqref{eq:seq} in the remaining path. There are two possible cases:
\begin{itemize}
\item An \emph{incomplete} Sequence~\eqref{eq:seq} where $U_4$ is not reached (e.g., $U_1 f U_2 f' U_3 f'' U_2 f''' U_1$): This cannot happen because such a sequence forbids to reach $U_4$ later, and thus one encapsulated occurrence of protocol $a$ is never decapsulated and $\P$ cannot be feasible. Such a sequence cannot occur after an occurrence of Sequence~\eqref{eq:seq} on the same nodes because if a node $U_i$ ($i=2,3$) is reached in a Sequence~\eqref{eq:seq} it cannot be reached again due to the bandwidth constraint.
\item A sequence $U_1 f V_1 f' W_1$: Let $\P$ be a feasible path from $S$ to $D$ containing a sequence $U_1 f V_1 f' W_1$ (where $U_1$ and $W_1$ may be the same node). These three nodes can only encapsulate protocol $a$ or $b$ in protocol $b$. Thus, after crossing such a sequence, there are three occurrences of protocol $b$ on the top of the protocol stack. However, in network $\N$, there is no possible sequence of nodes and adaptation functions able to decapsulate protocol $b$ three consecutive times. Thus, $\P$ is not feasible.
\end{itemize}
Thus, if a feasible path exists, then it contains only one occurrence of Sequence~\eqref{eq:seq} for each node $U'\in\V'$. Replacing each Sequence~\eqref{eq:seq} by the corresponding node in $\V'$ induces a Hamiltonian path in $\Hh$. This concludes the proof. 
\end{proof}

Unfortunately, the previous negative result implies:
\begin{corollary}
Problem~\eqref{def:BFP} is not approximable (unless $\mathsf{P}=\mathsf{NP}$).
\end{corollary}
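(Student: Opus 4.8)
The plan is to obtain inapproximability as an immediate consequence of Proposition~\ref{prop:NPsym}, without invoking any gap-preserving reduction or PCP machinery. The essential observation is that, by definition, any approximation algorithm for the minimization Problem~\eqref{def:BFP} must return a \emph{feasible} path whenever one exists: its output is required to be within a bounded factor of a finite optimum, so it cannot answer with a path violating the constraints. Consequently, such an algorithm can be used as a decision oracle for the associated feasibility question, which is exactly Problem~(\ref{def:BFP}') and which Proposition~\ref{prop:NPsym} shows to be $\mathsf{NP}$-complete. The whole argument is therefore a reduction of approximation to feasibility, inheriting its hardness from the already-established reduction from SYM-HAM.

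Concretely, I would fix an arbitrary approximation ratio $\alpha\geq 1$ (even allowed to grow as an arbitrary computable function of the input size) and suppose, for contradiction, that there is a polynomial-time algorithm $A$ that, on every instance of Problem~\eqref{def:BFP} admitting a feasible path, outputs a feasible path of cost at most $\alpha\cdot\mathrm{OPT}$. I would then run $A$ on the very instances produced by the reduction of Proposition~\ref{prop:NPsym}, with the weights $h$ chosen arbitrarily (say all equal to $1$, so that $\mathrm{OPT}$ is finite precisely when a feasible path exists). Inspecting the output of $A$ decides Problem~(\ref{def:BFP}'): if $A$ returns a path, a feasible path exists and the answer is ``yes''; if $A$ fails to exhibit a feasible path, none exists and the answer is ``no''. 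Since the reduction is polynomial and $A$ runs in polynomial time, this yields a polynomial-time algorithm for an $\mathsf{NP}$-complete problem, forcing $\mathsf{P}=\mathsf{NP}$.

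The only delicate point, and the step I would state most carefully, is precisely this reduction of approximation to feasibility. One must argue that $A$ genuinely behaves as a feasibility oracle: its defining guarantee constrains its behaviour only on ``yes'' instances, so I would be explicit that any run in which $A$ does not produce a valid path must correspond to a ``no'' instance, for otherwise $A$ would violate its own guarantee. This is what makes the argument independent of the particular value of $\alpha$, which is exactly the content of ``not approximable'': since deciding the mere existence of a feasible solution is already $\mathsf{NP}$-hard, no polynomial-time algorithm can achieve any finite approximation factor unless $\mathsf{P}=\mathsf{NP}$.
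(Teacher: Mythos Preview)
Your argument is correct and is exactly the approach the paper takes: any polynomial-time approximation algorithm would in particular decide feasibility, which Proposition~\ref{prop:NPsym} shows to be $\mathsf{NP}$-complete, hence $\mathsf{P}=\mathsf{NP}$. The paper states this in one sentence, while you spell out the feasibility-oracle reduction more carefully, but the underlying idea is identical.
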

	
\begin{proof}
Since the existence of a feasible path (independently of its cost) is $\mathsf{NP}$-complete to decide, any polynomial approximation algorithm would imply $\mathsf{P}=\mathsf{NP}$.
\end{proof}	
On the other hand, the problem is tractable on some particular topologies:
\begin{corollary}
\label{cor:DAG}
Problems~\eqref{def:BFP} and (\ref{def:BFP}') are polynomial if the graph $\Gg=(\V,\E)$ is a Directed Acyclic Graph (DAG).
\end{corollary}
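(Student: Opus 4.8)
The plan is to exploit the single defining feature of a DAG that breaks the general hardness construction: it has no directed cycles, so no walk can revisit a node. First I would argue that in a DAG every feasible path is \emph{simple} and therefore crosses each link at most once. Indeed, if a feasible path traversed some node twice, the portion of the path between the two visits would form a directed cycle in $\Gg$, contradicting acyclicity. Consequently $\nb(E)\in\{0,1\}$ for every link $E$, and in particular $\nb(E)=1$ for every link actually used by the path. This is exactly the property that fails in the symmetric construction of Proposition~\ref{prop:NPsym}, where the whole hardness stems from links being reused (the tail forcing repeated traversals encoded as an encapsulation stack).

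With this observation the bandwidth constraint becomes \emph{prunable}. Since $\nb(E)=1$ on the links of any feasible path, the constraint $\min_{E\in\P}\frac{\qos_b(E)}{\nb(E)}\geq\qos_b^{min}$ reduces to requiring $\qos_b(E)\geq\qos_b^{min}$ for each link $E$ on the path. Hence I would simply delete from $\Gg$ every link $E$ with $\qos_b(E)<\qos_b^{min}$, obtaining a subgraph $\Gg'$ which is still a DAG (removing edges cannot create a cycle). Any feasible path in $\Gg'$ automatically satisfies the bandwidth constraint, and conversely any bandwidth-satisfying feasible path in $\Gg$ uses only links that survive the pruning, hence lies entirely in $\Gg'$. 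So the two instances have exactly the same solution set.

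The final step reduces both problems to their bandwidth-free counterparts on $\Gg'$. For the decision problem~(\ref{def:BFP}') it suffices to test the existence of a feasible path from $S$ to $D$ in $\Gg'$; for the optimization problem~\eqref{def:BFP} one computes a minimum-cost feasible path in $\Gg'$. Both tasks are solved in polynomial time by the algorithm of Section~\ref{sec:without}, which applies to arbitrary graphs (in particular DAGs) and already handles protocol continuity together with the encapsulation/decapsulation stack. Since the pruning step is linear in $|\E|$ and leaves the protocol-stack feasibility untouched, composing it with the polynomial unconstrained algorithm yields a polynomial procedure for both~\eqref{def:BFP} and~(\ref{def:BFP}').

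I expect the only genuine subtlety to be the first step: cleanly justifying that the loops responsible for non-prunability (and for the $\mathsf{NP}$-hardness of the general problem) cannot arise in a DAG, so that $\nb(E)\le 1$ and the bandwidth requirement \emph{decouples} from the stack-based feasibility. Everything afterwards is a direct reduction to the already-established polynomial case, so no new algorithmic machinery is needed.
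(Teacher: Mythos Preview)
Your proposal is correct and follows essentially the same argument as the paper: in a DAG no feasible path can revisit a link, so $\nb(E)\le 1$, the bandwidth constraint becomes prunable, and after pruning one invokes the polynomial algorithm of Section~\ref{sec:without}. Your write-up is simply a more detailed version of the paper's (very terse) proof.
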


\begin{proof}
The $\mathsf{NP}$-completeness of Problem~(\ref{def:BFP}') results from the fact that the bandwidth constraint is not prunable when feasible paths involve loops. In a DAG, every link is involved at most once in a feasible path due to the absence of cycles. Thus the bandwidth constraint is prunable and the problem can be resolved using the method described in Section~\ref{sec:without}.
\end{proof}

\subsection{DAG Heuristic}
As seen in Section~\ref{sec:trans_phase}, shortest feasible paths involving loops are infrequent (for $p>20\%$). Combining this fact with Corollary~\ref{cor:DAG} suggests a heuristic to compute feasible path under bandwidth constraint: Convert the network into a DAG and perform the PDA algorithm to compute a shortest feasible path.

\noindent{\bf DAG Conversion.} The network is converted into a DAG in the following way:
\begin{enumerate}
\item Set the number $0$ to node $S$ and $|\V|-1$ to node $D$ (recall that $S$ and $D$ are the extremities of the graph diameter);
\item Perform a BFS algorithm starting from node $S$ and number the nodes in the visit order. The nodes at the same distance from $S$ are visited randomly, thus performing several times this heuristic does not always give the same node numbering and the same DAG;
\item Delete all the links that start at a node and end at a node with a smaller number.
\end{enumerate}

The DAG heuristic is as follows:
\begin{enumerate}
\item Convert the network into a DAG; 
\item Prune the links without enough bandwidth;
\item Perform the PDA algorithm of Section~\ref{sec:without}.
\end{enumerate}

\subsection{Simulations}
We study the efficiency of the DAG heuristic (called DAG-PDA) and compare it with the algorithm of Kuipers and Dijkstra~\cite{K09}. The latter is an exact (and thus exponential) algorithm that performs a BFS and explores all the paths that are not dominated and that satisfy the bandwidth constraint. As in Section~\ref{sec:simul_without_bandwidth}, the BFS algorithm is slow. Thus, we also compare our algorithm with DAG-BFS algorithm, where the network is converted into a DAG before performing the BFS.
The simulation conditions (parameters, topology, number of runs, etc.) are the same as in Section~\ref{sec:simul_without_bandwidth}. The bandwidth capacity of the links is randomly and uniformly selected in the set $\{1,2,\dots,10\}$. The bandwidth constraint is set to $2$.
\subsubsection{Comparison of the feasibility ratio}
Converting the network topology into a DAG deletes some feasible paths in the original network. We measure how much feasible paths are lost by comparing the probability of feasible path existence before and after the DAG conversion according to the probability of existence of adaptation functions ($p$).
 \begin{figure} 
 \centering
 \includegraphics[width=0.45\textwidth]{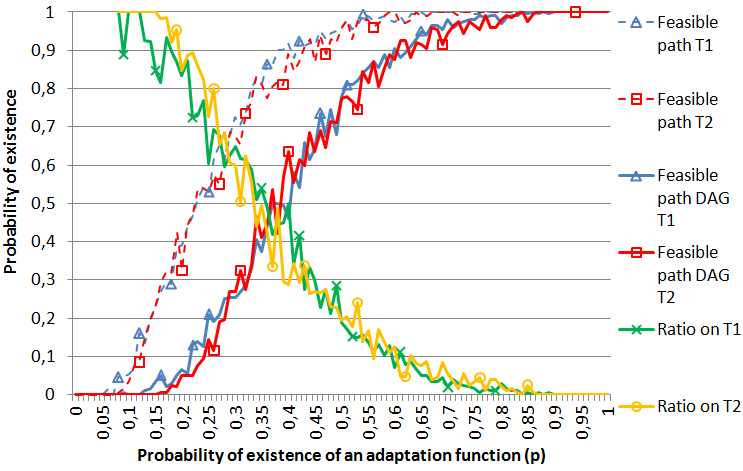}
 \caption{Probability of feasible path existence before and after DAG conversion on Topologies $T1$ and $T2$.}
 \label{fig:trans_phase_dag}
 \end{figure}
Figure~\ref{fig:trans_phase_dag} shows that the probability of feasible path existence is shifted to the right after the DAG conversion. The ratio $\frac{\text{Probability of feasible path existence in T}i}{\text{Probability of feasible path existence in DAG T}i}$ ($i=1,2$) is clearly decreasing and is less than $50\%$ if $p>0.34$, which is important but balanced by the improvement of the processing time.

\subsubsection{Comparison of the processing time}

 \begin{figure} 
 \centering
 \includegraphics[width=0.45\textwidth]{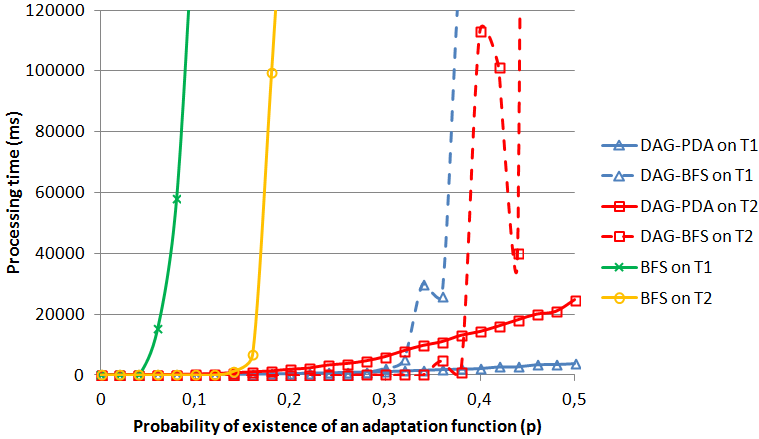}
 \caption{Comparison of the processing time of DAG-PDA, DAG-BFS and BFS algorithms on Topologies $T1$ and $T2$.}
 \label{fig:dag_pda}
 \end{figure}

Figure~\ref{fig:dag_pda} shows the processing time of DAG-PDA, DAG-BFS and BFS algorithms on both topologies according to the probability of existence of an adaptation function. BFS algorithm is slow even for small values of $p$.
For $p<0.3$ (resp. $0.4$) on Topology~$T1$ (resp. $T2$), DAG-BFS is faster than DAG-PDA. Beyond these values, the processing time of DAG-BFS explodes. For example, for $p=0.5$, the processing time of DAG-BFS is more than $35$ minutes on Topology~$T1$ and more than $53$ minutes on Topology~$T2$, while that of DAG-PDA is $3.8$ seconds on $T1$ and $24$ seconds on $T2$. These results show that the DAG-PDA algorithm is clearly faster when there is a significant number of adaptation functions, but the exponential DAG-BFS algorithm is faster if there are few of them (for small values of $p$).

\section{Path computation under QoS constraints}
\label{sec:qos2}
\subsection{Multi-constrained feasible path}
\label{sec:multi_cons}
Let $\N$ be a multi-layer network. Each link $E=(U,V)$ is associated to a set of $m$ additive QoS metrics $\qos(E)=(\qos_1(E),\dots,\qos_m(E))$ in addition to its available bandwidth $\qos_b(E)$. These additive metrics can be the delay, logarithm of the packet-loss, etc.

Let $\qos_b^{\min}$ be the bandwidth constraint and $\qos^{max}=(\qos_1^{\max},\qos_2^{\max}\dots,\qos_m^{\max})$ be a vector of QoS constraints, the problem of computing a shortest feasible path under these constraints is formalized as:
\begin{equation}
\label{def:MCFP}
\begin{split}
\min &\ h(\P)=\sum_{(U,f,V)\in\P}h(U,f,V)\\
 s.t.& \left\lbrace\begin{array}{l}
\P \text{ is a feasible path between $S$ and $D$}\\
 \\
\min_{E\in\P}\dfrac{\qos_b(E)}{\nb(E)}\geq \qos_b^{\min} \\
 \\ 
\sum_{E\in\P}\left(\qos_i(E)\times\nb(E)\right)\leq \qos_i^{\max},\ i=1\dots m\\
\end{array} \right.
\end{split}
\end{equation}

\subsection{Complexity of multi-constrained feasible path computation}
\label{sec:complexity}

The problem of QoS multi-constrained path computation (on a single layer) is well studied. It is well-known that the decision version associated to this problem is $\mathsf{NP}$-complete, even with $2$ additive and/or multiplicative constraints~\cite{Wang96}. Van~Mieghem and Kuipers~\cite{Van04} gave an exponential time algorithm but showed that the instances that really require an exponential computation time are infrequent. 
The classical multi-constrained path problem is a particular case of Problem~\ref{def:MCFP}, corresponding to the case where there is only one protocol and passive transitions. Thus the decision version associated to Problem~\ref{def:MCFP} is also $\mathsf{NP}$-complete.

\subsection{ML-SAMCRA}
\label{sec:qos}
As computing a multi-layer path under QoS constraints is $\mathsf{NP}$-complete, any algorithm able to solve this problem is exponential in the worst case (unless $\mathsf{P}=\mathsf{NP}$). We propose to adapt the Self-Adaptive Multiple Constraints Routing Algorithm (SAMCRA) to the multi-layer context in order to compute a shortest feasible path under QoS constraints.

SAMCRA is an exact QoS routing algorithm proposed by Van Mieghem and Kuipers~\cite{Van04}. It computes the shortest path under several (additive) QoS constraints but it ignores the feasibility constraint as defined in our paper. SAMCRA has an exponential worst case complexity, but it exhibits a reasonable processing time in practice.

\subsubsection{The main concepts of SAMCRA}
The idea of SAMCRA is to maintain a path list from the source node $S$ to all other nodes until reaching the destination node $D$. It progressively removes the paths that do not comply with the QoS constraints. The main concepts of SAMCRA are:
\begin{itemize}
\item \textit{Non-linear path length:} In SAMCRA, the path length is defined as a non-linear function of the QoS parameters of each link. It reduces the solution space to scan but the algorithm can apply with any metric. Hence, it is not a strict requirement.
\item \textit{The $k$-shortest path algorithm:} The $k$-shortest path algorithm maintains the list of the paths that are not (yet) removed from the path list. 
\item \textit{Non-dominance:} A multi-constrained path $\P$ \textit{dominates} another path $\P'$ if $\forall i, \sum_{E\in \P}\qos_i(E)\leq\sum_{E\in \P'}\qos_i(E)$ (i.e., if $\P$ is better than $\P'$ for each QoS parameter). A path $\P$ is non-dominated if there is no path which dominates it. The concept of non-dominance induces a partial order over the paths. It avoids the exploration of several paths thus substantially reducing the average complexity of SAMCRA. 
\end{itemize}
The path length definition is not impacted by the multi-layer context and using a linear path length function is not forbidden. The $k$-shortest path algorithm is not impacted either. However, the concept of dominance must be redefined to meet the path feasibility constraint and to take into account possible loops.
\subsubsection{Extension of the non-dominance definition}
A multi-layer path is characterized by its nodes but also by its protocol stack at the destination node. Thus in the algorithm path list, each path should be stored with its protocol stack at its final node. A multi-layer path can involve the same link several times. Before checking if this path complies with some QoS parameters, the parameters of each link should be multiplied by the number of times this link is involved in the path. The bandwidth constraint is not prunable in multi-layer context, the new non-dominance definition should take it into account.

A path $\P$ dominates a path $\P'$ if the four following conditions are satisfied:
\begin{itemize}
\item $\min_{E\in \P}\frac{\qos_b(E)}{nb_\P(E)}\geq\min_{E\in\P'}\frac{\qos_b(E)}{nb_{\P'}(E)}$
\item $\sum_{E\in \P}\qos_i(E)\times nb_\P(E)\leq\sum_{E\in\P'}\qos_i(E)\times nb_{\P'}(E)$ \\$\forall i=1,\dots,m$
\item $\P$ and $\P'$ have the same final node;
\item $\P$ and $\P'$ have the same protocol stack at this node.
\end{itemize}
Where $nb_\P(E)$ (resp. $nb_{\P'}(E)$) is the number of times the link $E$ is involved in path $\P$ (resp. $\P'$). According to this new definition of non-dominance, ML-SAMCRA explores all the possible paths until reaching the destination node with satisfactory QoS parameters. Along the exploration, it removes all paths that are dominated or not feasible. 

\subsection{Simulations}
We know study the efficiency of ML-SAMCRA through simulations and check if it is as scalable in a multi-layer context as SAMCRA in a single layer context.
\begin{figure} 
\centering
\includegraphics[width=0.45\textwidth]{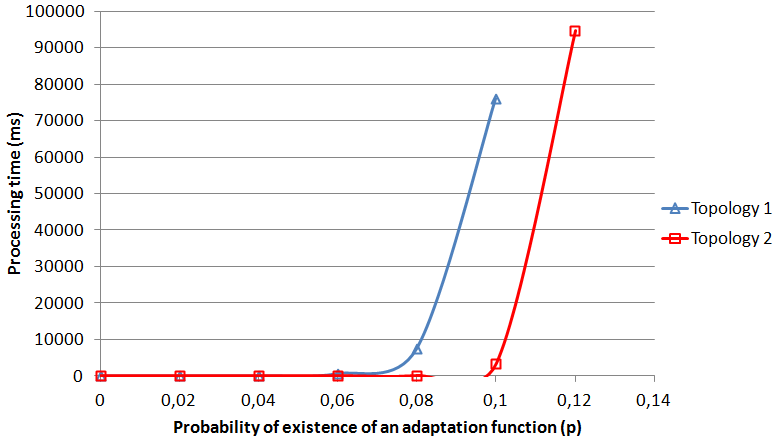}
\caption{Processing time of ML-SAMCRA Topologies $T1$ and $T2$.}
\label{fig:ml_samcra}
\end{figure}
Figure~\ref{fig:ml_samcra} shows the processing time of ML-SAMCRA on Topologies~$T1$ and $T2$ according to the probability of existence of an adaptation function ($p$). The results show that for $p>0.08$ (resp. $0.10$) on Topology~$T1$ (resp. $T2$) the processing time explodes (more than  $1$ minutes). Clearly, ML-SAMCRA does not scale above these values. There are two reasons:
\begin{enumerate}
\item The paths are less comparable in term of the new non-dominance definition: They should have the same protocol stack. As there are less dominated paths, the algorithm complexity increases;
\item Taking into account loops increases the number and the length of the paths, which also increases the algorithm complexity.
\end{enumerate}
So, path computation under QoS constraints in multi-layer networks is more complex than in single layer networks. Thus, exact algorithms are suitable only for small instances.
\section{Conclusion}
\label{sec:conclusion}

Most of carrier-grade networks manage their different layers thanks to separate control planes. Designing a unified control plane would allow the network resources to be optimized and the operational management costs to be reduced. One key problem to address is path computation taking into account the protocol heterogeneity and the multi-layer context dealing with encapsulation, conversion and decapsulation of protocols. This paper tackles this issue by partitioning it into three cases: Path computation without bandwidth constraint, under bandwidth constraint and under additive QoS constraints. For the first case, we widely generalized polynomial algorithms in the state of the art and decreased their complexity. Through simulations, we showed that they outperform previous approach in the literature. For the second case, we obtained several time complexity results and proposed efficient heuristics. Finally, we designed the first algorithm to resolve the third case. In future works, we plan to design heuristics to deal with additive QoS metrics, as the exact approach seems to be not scalable. The problem of efficient generation of random topologies being widely open, it would be interesting to analytically study the phase transition phenomenon in order to generate topologies having a suitable number of feasible paths.

\bibliographystyle{IEEEtran}
\bibliography{automata_infocom}

\appendices

\section{Polynomial algorithms for path computation in multi-layer networks}
\label{appendix:algos}
The sequence of protocols involved in a feasible multi-layer path is a context-free language. Based on this fact,  Lamali~\textit{et~al.}~\cite{Lamali2013} used automata and language theory tools to compute the shortest feasible path in hops or in adaptation functions. We improve their algorithm in order to compute the shortest path according to any additive metric. We also substantially reduce its complexity.
\label{sec:wpda}
\subsection {Theoretical language aspects of multi-layer paths}

Considering a path $\P = Sf_0U_1f_1U_2f_2\dots U_nf_nD$, let $H_\P=f_1\dots f_n$ denotes the sequence of adaptation functions along $\P$. Let define as an alphabet the set $\overline{\alphabet}=\{\overline{a} \mid a\in\alphabet\}$ and the set $\underline{\alphabet}=\{\underline{a}\mid a\in\alphabet\}$. 

$\trace_\P = x_1\dots x_{n+1}$ is the sequence of protocols used along path $\P$. It is called the \emph{trace} of $\P$. For each $x_i$:
	\begin{itemize}
		\item $x_i = a$ and $x_{i+1}=b$, $\overline{b}$ or $\underline{b}$ means that $U_i$ converts protocol $a$ into $b$ ( $a,b,\overline{b}, \underline{b} \in {\alphabet} \cup \overline{\alphabet}\cup \underline{\alphabet}$) 
		\item $x_i = \overline{a}$ and $x_{i+1} = b$, $\overline{b}$ or $\underline{b}$ means that $U_i$ encapsulates protocol $a$ in $b$
		\item $x_i = \underline{a}$ and $x_{i+1} = b$, $\overline{b}$ or $\underline{b}$ means that $U_i$ decapsulates protocol $b$ from $a$.
	\end{itemize}
	
Here, some additional definitions are needed.	
The set of protocol conversions available on node $U$ is denoted by ${\cal CO}(U)$. The set of encapsulations available on node $U$ is denoted by ${\cal EN}(U)$ and the set of decapsulations  available on node $U$ is denoted by ${\cal CO}(U)$.

	$In(U)$ (resp. $Out(U)$) is the set of protocols that node $U$ can receive (resp. send). More formally: 	
	\begin{itemize}
	\item If $(a\rightarrow b)\in {\cal CO}(U)$ then $a\in In(U)$ and $b\in Out(U)$
	\item If $(a\rightarrow ab)\in {\cal EN}(U)$ then $a\in In(U)$ and $b\in Out(U)$
	\item If $\overline{(a\rightarrow ab)}\in {\cal DE}(U)$ then $b\in In(U)$ and $a\in Out(U)$
	\end{itemize}
	
Obviously, several paths can have the same trace. The set of traces of the feasible paths in a network $\N$ is a context-free language but it is not regular as the encapsulations and decapsulations should be balanced. In fact, it is a well-parenthesized language, and thus requires a stack to be recognized and computed. PDAs are the classical tools to recognize context-free languages. Using weighted PDAs allows associating a weight to each link and adaptation function in order to model any additive metric.
\subsection{Definition of WPDA}
A weighted PDA (WPDA) is a $8$-tuple
$\automaton =(\States, \Alphabet, \Symbols, \Transitions, Q_0, \startsymb, \Fstates, \omega)$
where $\States$ is the set of states, $\Alphabet$ is the input alphabet, $\Symbols$ is the stack symbol set (i.e., stack alphabet) not necessarily different from $\Alphabet$, $\Transitions$ is the set of transitions, $Q_0$ is the initial state, $\startsymb$ is the initial stack symbol, $\Fstates$ is the set of final (accepting) states and $\omega$ is a weight function over the set of transitions (i.e., $\omega :\ \Transitions \rightarrow \Re_+$).

A transition $t\in \Transitions$ is denoted by $t=(Q_i,\langle x, \alpha, \beta\rangle,Q_j)$, where $Q_i$ is the state of $\automaton$ before the transition, $Q_j$ is the state after the transition, $x\in\Alphabet\cup\{\epsilon\}$ is an input symbol, $\alpha\in\Symbols$ is the symbol which is popped from the top of the stack, and $\beta\in\Gamma^*$ is the symbol sequence which is pushed on the top of the stack.

{\bf Remark.} WPDAs are more often formalized as $6$-tuples $\automaton =(\States, \Symbols, \mathcal{M}, q_0, \startsymb, \Fstates)$ where $\mathcal{M}$, called the \textit{Push-Down transition matrix}, is a matrix over a semiring of formal power series. The input alphabet $\Alphabet$, the transitions set $\Transitions$ and the weight function $\omega$ are expressed by a single entity $\mathcal{M}\in((\mathcal{R}\langle\langle\Alphabet^*\rangle\rangle)^{\States\times\States})^{\Symbols^*\times\Symbols^*}$, where $\mathcal{R}\langle\langle\Alphabet^*\rangle\rangle$ denotes the collection of all power series from $\Alphabet^*$ into a semiring $\mathcal{R}$. For simplification purposes, we opted for defining a WPDA as a classical PDA with a weight function over the transition set. For the theoretical foundations of WPDAs, the interested reader can refer to~\cite{petre2009}.

\subsection{From the graph to the WPDA}
Algorithm~\ref{algo:net_pda} converts a multi-layer network $\N$ with a specified pair of nodes $(S,D)$ into a WPDA $\automaton = (\States, \Alphabet, \Symbols, \Transitions, Q_0, \startsymb, \Fstates={Q_F}, \omega)$. 

Computing a feasible path requires to know the current protocol and the last encapsulated one (in order to know if a decapsulation can be performed). Thus Algorithm~\ref{algo:net_pda} creates a state $U_x$ for each node $U$ and each protocol $x\in In(U)$. Being in state $U_x$ indicates that the current protocol is $x$. The last encapsulated protocol is the one on the top of the stack.

The conversion functions $(x\rightarrow y)$ between node $U$ and node $V$ are turned into transitions $(U_x, \langle x, \alpha, \alpha\rangle, V_y)$ in the WPDA. The encapsulation functions $(x\rightarrow xy)$ are converted into pushes of $x$ on the stack $(U_x, \langle \overline{x}, \alpha, x\alpha\rangle, V_y)$ and the decapsulation functions into pops of $x$ from the stack $(U_y, \langle \underline{y},x, \emptyset\rangle, V_x)$.
\begin{algorithm}[!]
\caption{Convert a network into a WPDA}
\label{algo:net_pda}
\begin{algorithmic}
\Require A network $\N=(\Gg=(\V,\E), \alphabet, \AD, h)$, a source $S$ and a destination $D$
\Ensure A WPDA $\automaton= (\States, \Alphabet, \Symbols, \Transitions, Q_0, \startsymb, \{Q_F\}, \omega)$

\State $\Alphabet \gets \alphabet \cup \overline{\alphabet}\cup \underline{\alphabet} $ ; $\Gamma \gets \alphabet \cup \{ \startsymb\} $ 
\State Create $\States$ (the set of states of the WPDA) according to Procedure~\ref{algo:crea_states}
\State Build the transition set $\Transitions$ according to:
\State \hspace{6mm} \--- Procedure~\ref{algo:conv_trans} for the set of conversion functions
\State \hspace{6mm} \--- Procedure~\ref{algo:encap_push} for the set of encapsulation functions
\State \hspace{6mm} \--- Procedure~\ref{algo:decap_pop} for the set of decapsulation functions

\end{algorithmic}
\end{algorithm}


\setcounter{algorithm}{0}
\floatname{algorithm}{Procedure}

\begin{algorithm}[!]
\caption{Create $\States$, the set of states of the WPDA}
\label{algo:crea_states}
\begin{algorithmic}
\State Create a single state $Q_0$ corresponding to node $S$
\State Create a fictitious final state $Q_F$
\State For each node $U\neq S$ in $\nodes$, for each protocol $x \in In(U)$, create a state $U_x$ 
\For{each state $U_x$ s.t. $(S,U)\in \E$, for each $x\in Out(S)$}
\State Create the transition $t=(Q_0, \langle\epsilon, Z_0, Z_0\rangle, U_x)$
\State $\omega(t) \gets 0$
\EndFor
\For{each $x\in In(D)$}
\State Create the transition $t=(D_x, \langle x, Z_0, \emptyset\rangle, Q_F)$
\State $\omega(t)\gets 0$
\EndFor
\end{algorithmic}
\end{algorithm}

\begin{algorithm}
\caption{Transform the conversions}
\label{algo:conv_trans}
\begin{algorithmic}
\For{each link $(U,V)\in \E$ s.t. $U\neq S$}
\For{each $(x\rightarrow y)\in {\cal CO}(U)$}

\If{$y\in In(V)$}
	\For{all $\alpha\in\Symbols$}
	\State Create the transition $t=(U_x, \langle x, \alpha, \alpha\rangle, V_y)$
	\State $\omega(t)\gets h(U,(x\rightarrow y),V)$
	\EndFor

\EndIf
\EndFor
\EndFor
\end{algorithmic}
\end{algorithm}

\begin{algorithm}
\caption{Transform the encapsulations}
\label{algo:encap_push}
\begin{algorithmic}
\For{each link $(U,V)\in \E$ s.t. $U\neq S$}
\For{each $(x\rightarrow xy)\in {\cal EN}(U)$}

\If{$y\in In(V)$}
	\For{all $\alpha\in\Symbols$}
	\State Create the transition $t=(U_x, \langle \overline{x}, \alpha, x\alpha\rangle, V_y)$
	\State $\omega(t)\gets h(U,(x\rightarrow xy),V)$
	\EndFor

\EndIf
\EndFor
\EndFor
\end{algorithmic}
\end{algorithm}

\begin{algorithm}[!]
\caption{Transform the decapsulations}
\label{algo:decap_pop}
\begin{algorithmic}
\For{each link $(U,V)\in \E$ s.t. $U\neq S$}
\For{each $\overline{(x\rightarrow xy)}\in {\cal DE}(U)$}
\If{$x\in In(V)$}
	\State Create the transition $t=(U_y, \langle \underline{y},x, \emptyset\rangle, V_x)$
	\State $\omega(t)\gets h(U,\overline{(x\rightarrow xy)},V)$

\EndIf
\EndFor
\EndFor
\end{algorithmic}
\end{algorithm}

\floatname{algorithm}{Algorithm}
\setcounter{algorithm}{1}

\noindent{\bf Complexity of Algorithm~\ref{algo:net_pda}.} The complexity of Algorithm~\ref{algo:net_pda} is in $O(|\alphabet|^3\times|\E|)$. The number of states created by Procedure~\ref{algo:crea_states} is at worst $2+|\alphabet|\times(|\V|-1)$, and the complexity of Procedure~\ref{algo:crea_states} is in $O(|\alphabet|\times|\V|)$. The number of transitions created by Procedure~\ref{algo:conv_trans} and by Procedure~\ref{algo:encap_push} is in $O(|\alphabet|^3\times|\E|)$, which is also an upper bound for their complexity. The complexity of Procedure~\ref{algo:decap_pop} is bounded by $O(|\alphabet|^2\times|\E|)$. 

\begin{proposition}
\label{prop:equi}
A path $\P$ in a network $\N$ is feasible if and only if its trace $\trace_\P$ is accepted by $\automaton$. 
\end{proposition}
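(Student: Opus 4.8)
The plan is to prove both implications simultaneously by exhibiting a tight correspondence between the adaptation functions applied along $\P$ and the transitions fired by $\automaton$ while reading the trace $\trace_\P$, and then to show that the acceptance condition of $\automaton$ coincides exactly with the two feasibility requirements of Section~\ref{sec:path_feas}. The central tool is a configuration invariant: I claim that after $\automaton$ has read the prefix of $\trace_\P$ corresponding to the first $i$ nodes of $\P$, it can sit in a configuration whose state is $U_x$ (recording that the current node is $U_i$ and the protocol carried into it is $x$) and whose stack content, read from top to bottom, is exactly the list of protocols still encapsulated at that point along $\P$, in last-in-first-out order, above the bottom marker $\startsymb$. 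I would prove this invariant by induction on $i$.

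For the base case I use the $\epsilon$-transition built by Procedure~\ref{algo:crea_states} from $Q_0$ into the first state $U_x$, which leaves the stack equal to $\startsymb$ and matches the empty encapsulation history at the source $S$. For the inductive step I split on the type of the adaptation function applied and invoke the matching construction step. A conversion $(x\rightarrow y)$ fires a transition of the form produced by Procedure~\ref{algo:conv_trans}, which reads the trace symbol $x$, leaves the stack unchanged, and moves to $V_y$; an encapsulation $(x\rightarrow xy)$ fires a push transition from Procedure~\ref{algo:encap_push}, which reads $\overline{x}$ and prepends the freshly encapsulated protocol $x$ to the stack; and a decapsulation $\overline{(x\rightarrow xy)}$ fires a pop transition from Procedure~\ref{algo:decap_pop}, which reads $\underline{y}$ and is enabled only when the symbol on top of the stack is precisely $x$. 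This last point is where correct nesting enters: the pop is applicable exactly when the protocol most recently encapsulated is the one being unwrapped, which is the ordering constraint in the feasibility definition. The state reached after each transition is indexed by the outgoing protocol, so the next trace symbol is forced to be consistent with the protocol leaving $U_i$; this encodes protocol continuity.

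With the invariant established, both directions of the equivalence follow. For the forward implication, if $\P$ is feasible then each adaptation function satisfies continuity and every encapsulation is eventually matched by a decapsulation in the right order, so the invariant guarantees that after the whole trace is read the stack has returned to $\startsymb$; the final transition of Procedure~\ref{algo:crea_states} into $Q_F$ (which reads a protocol in $In(D)$ and pops $\startsymb$) can then fire, so $\trace_\P$ is accepted. For the converse, any accepting run visits a sequence of states whose node components spell a walk $SU_1\dots U_nD$ in $\Gg$ using only available adaptation functions (by construction of the transition set), and reaching $Q_F$ forces the stack to equal $\startsymb$ at the end; the invariant then shows that every push was matched by a later pop in LIFO order, i.e.\ every encapsulation is decapsulated in the correct order before $D$ is reached, so $\P$ is feasible.

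The main obstacle will be making the stack invariant fully rigorous, in particular arguing that the \emph{only} configurations reachable while reading $\trace_\P$ are the intended ones, so that acceptance cannot slip through an unintended run. This reduces to checking that the state label $U_x$ together with the single-symbol pop discipline leaves no freedom beyond the choice of edge in $\Gg$: each transition is keyed simultaneously by the incoming protocol (through the state $U_x$), by the trace symbol, and by the stack top, so a given trace symbol at a given node determines the transition type uniquely. Verifying this rigidity, and confirming that no spurious $\epsilon$-moves or alternative pops are available, is the delicate part of the argument.
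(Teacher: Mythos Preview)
Your proposal is correct and follows essentially the same approach as the paper: exhibit the bijection between adaptation functions along $\P$ and the three transition types built by Procedures~\ref{algo:conv_trans}--\ref{algo:decap_pop}, then argue by induction that the correspondence carries acceptance to feasibility and back. The paper's own proof is considerably more telegraphic---it simply lists the three cases of the correspondence and defers the rest to ``it is easy to show by induction''---whereas you make the stack invariant explicit and worry about rigidity of the run; but this is a difference of detail and rigor, not of method.
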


\begin{proof}
Consider a feasible path $\P=Sf_0U_1f_1U_2f_2\dots U_nf_nD$. By construction, for each $3$-tuple $(U_i,f_i,U_{i+1})$ there is a transition:
\begin{itemize} 
\item $t=({(U_i)}_x,\langle x,\alpha,\alpha\rangle,{(U_{i+1})}_y)$ if $f_i=(x\rightarrow y)$
\item $t=({(U_i)}_x,\langle \overline{x},\alpha,x\alpha\rangle,{(U_{i+1})}_y)$ if $f_i=(x\rightarrow xy)$
\item $t=({(U_i)}_x,\langle \underline{x},y,\emptyset\rangle,{(U_{i+1})}_y)$ if $f_i=\overline{(y\rightarrow yx)}$
\end{itemize}
This transition recognizes the $i$-th letter of the trace $\trace_\P$. It is easy to show by induction that $\trace_\P$ is accepted by the automaton.

Conversely, if a trace $\trace_\P$ is accepted by a transition sequence $t_1\dots t_n$ where each $t_i=({(U_i)}_x,\langle x,\alpha,\beta\rangle,{(U_{i+1})}_y)$. Then there is an adaptation function:
\begin{itemize}
\item $f_i=(x\rightarrow y)\in {\cal CO}(U_i)$ if $t_i=({(U_i)}_x,\langle x,\alpha,\alpha\rangle,{(U_{i+1})}_y)$
\item $f_i=(x\rightarrow xy)\in {\cal EN}(U_i)$ if $t_i=({(U_i)}_x,\langle \overline{x},\alpha,x\alpha\rangle,{(U_{i+1})}_y)$
\item $f_i=\overline{(y\rightarrow yx)}\in {\cal DE}(U_i)$ if $t_i=({(U_i)}_x,\langle \underline{x},y,\emptyset\rangle,{(U_{i+1})}_y)$
\end{itemize}
Thus the path $Sf_0U_1f_1U_2f_2\dots U_nf_nd$ is feasible in $\N$.
\end{proof}


The weight of a path $\P=Sf_0U_1f_1U_2f_2\dots U_nf_nD$ is defined as the sum of the weights of its links and its adaptation functions. It is denoted by $h(\P)\overset{def}{=}\sum_{i=1}^nh(U_i,f_i,U_{i+1})$ with $U_{n+1}=D$. 

We define the weight of a transition sequence as the sum of the weights of each transition (i.e., $\omega(\{t_1,t_2,\dots,t_n\})=\sum_{i=1}^n\omega(t_i)$). The weigh of a word $w$, denoted by $\omega(w)$, is the weight of the transitions that accept $w$ in $\automaton$. But as $\automaton$ may be nondeterministic, it is possible that several transition sequences accept the same word. Thus we consider only the sequence of transitions of minimum weight that accepts $w$. More formally, $\omega(w)=\min_{t_1,\dots, t_n\in \Transitions}\omega(\{t_1,\dots t_n\})$ s.t. $\{t_1,\dots t_n\}$ accepts $w$.
\begin{lemma}
If $\automaton$ accepts the trace $\trace_\P$ of a path $\P$, then $\omega(\trace_\P)=h(\P^*)$, where $\P^*$ is the path of minimum weight having $\trace_\P$ as trace.
\end{lemma}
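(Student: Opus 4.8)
The plan is to establish a weight-preserving bijection between the accepting transition sequences of $\trace_\P$ in $\automaton$ and the feasible paths of $\N$ whose trace is $\trace_\P$, and then to conclude by taking the minimum on both sides. This refines Proposition~\ref{prop:equi}, whose proof already exhibits the underlying correspondence between transitions and triples $(U_i,f_i,U_{i+1})$; here I only need to keep track of the weights.

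First I would recall that, by the construction of Algorithm~\ref{algo:net_pda}, every transition $t$ of $\automaton$ is created from a single triple $(U,f,V)$ (a link together with an adaptation function on $U$) and its weight is set to $\omega(t)=h(U,f,V)$. Conversely, a conversion or an encapsulation triple may give rise to several transitions (one for each stack-top symbol $\alpha$, since the procedures iterate over all $\alpha\in\Symbols$), but all of them carry the same weight $h(U,f,V)$, whereas a decapsulation triple yields a single transition. Hence the weight label of a transition is determined by, and equal to, the weight of the triple it comes from.

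Next I would lift this to sequences. Given an accepting run $t_1\dots t_n$ of $\trace_\P$, the converse direction of Proposition~\ref{prop:equi} reconstructs a feasible path $\P'=Sf_0U_1f_1\dots U_nf_nD$ with $\trace_{\P'}=\trace_\P$: the states of the run encode the nodes $U_i$ and each transition encodes the function $f_i$. The fictitious transitions incident to $Q_0$ and $Q_F$ have weight $0$ and contribute nothing, so $\omega(\{t_1,\dots,t_n\})=\sum_i\omega(t_i)=\sum_i h(U_i,f_i,U_{i+1})=h(\P')$. In the other direction, the forward part of Proposition~\ref{prop:equi} maps a feasible path with trace $\trace_\P$ to an accepting run; since the path fixes the stack content at every step, the specific $\alpha$-indexed transition used at each letter is uniquely determined, so this map is inverse to the previous one. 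Because the states $(U_i)_x$ of a run record the visited nodes, distinct paths (differing in their node sequence) yield distinct runs, and the correspondence is a genuine weight-preserving bijection.

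Finally I would take minima. The hypothesis that $\automaton$ accepts $\trace_\P$ guarantees that the set of accepting runs is non-empty, hence so is the set of feasible paths with trace $\trace_\P$; this set is moreover finite, since all such paths have the same number of hops as the length of $\trace_\P$, so $\P^*$ is well defined. Combining the bijection with the definition $\omega(\trace_\P)=\min\{\omega(\{t_1,\dots,t_n\}) : t_1\dots t_n \text{ accepts } \trace_\P\}$ then gives $\omega(\trace_\P)=\min_{\trace_{\P'}=\trace_\P}h(\P')=h(\P^*)$. I expect the main obstacle to be the careful verification that the two maps are mutually inverse, in particular that fixing a feasible path pins down the whole run, including the stack symbols $\alpha$ that index the duplicated conversion and encapsulation transitions, so that the minimum over runs and the minimum over paths range over sets in exact correspondence.
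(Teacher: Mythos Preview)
Your proposal is correct and follows essentially the same approach as the paper: use the weight-preserving correspondence between accepting runs of $\trace_\P$ and feasible paths with trace $\trace_\P$ (already implicit in Proposition~\ref{prop:equi} and the construction of Algorithm~\ref{algo:net_pda}), then take the minimum on both sides. The paper phrases the second inequality as a proof by contradiction rather than via a bijection, and in fact you do not need the maps to be mutual inverses---it is enough that each run yields a path of equal weight and vice versa---so the extra care you propose about the $\alpha$-indexing pinning down a unique run is harmless but unnecessary.
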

\begin{proof}
By definition, $\omega(\trace_\P)=\omega(\{t_1,\dots,t_n\})$, where $\{t_1,\dots,t_n\}$ is the transition sequence with minimal weight which accepts $\trace_\P$. From $\{t_1,\dots,t_n\}$, it is possible to build the path $\P^*$ (inversing the conversion in Algorithm~\ref{algo:net_pda}) such that $\trace_{\P^*}=\trace_\P$ and $h(\P^*)=\omega(\trace_\P)$.

Suppose that $\exists \P'$ s.t. $\trace_{\P'}=\trace_\P$ and $h(\P')<\omega(\trace_\P)$, then it is possible to build from $\P'$ a sequence of transitions that corresponds to the links and adaptation functions involved in $\P'$ (as in Algorithm~\ref {algo:net_pda}). Let this sequence be $t_1'\dots,t_n'$. The weight of each transition $t_i'$ corresponds to the weight of an adaptation function associated to a link in $\P'$. The weight of $t_1'\dots,t_n'$ is then less than $\omega(\trace_\P)$, and by Proposition~\ref{prop:equi}, this sequence accepts $\trace_\P$. This is inconsistent with the definition of $\omega(\trace_\P)$.
\end{proof}

\subsection{Computing the minimal weight trace} 
\label{sec:short_word}
In order to compute the minimum weight trace and its corresponding path, $\automaton$ is converted into a weighted Context-Free Grammar (WCFG). 

\subsubsection{From the WPDA to a WCFG} 
\label{sec:pda-cfg}
A WCFG is a CFG with a weight function over the set of production rules. The conversion of a PDA into a CFG is well-known. The conversion of a WPDA into a WCFG is done in the same way, in addition the weight of each transition is assigned to the corresponding production rules (called rules in Algorithm~\ref{WPDA-WCFG}) in the WCFG.

Algorithm~\ref{WPDA-WCFG} is an adaptation of the general method described in~\cite{Hop06}. It converts $\automaton$ into a WCFG $\G= ({\mathcal Q},\Sigma, [Q_0],{\mathcal R},\pi)$ where:
\begin{itemize}
\item ${\mathcal Q}$ is the set of nonterminals,
\item $\Sigma$ is the alphabet or set of terminals (the same as the WPDA input alphabet),
\item $[Q_0]$ is the initial symbol (initial nonterminal, or axiom),
\item ${\mathcal R}$ is the set of production rules,
\item $\pi : {\mathcal R}\rightarrow \Re_+$ is the weight function over the set of production rules.
\end{itemize}

\begin{algorithm}
\caption{Convert a WPDA into a WCFG}
\label{WPDA-WCFG}
\begin{algorithmic}
\Require $\automaton = (\States, \Alphabet, \Symbols, \Transitions, Q_0, \startsymb, \{Q_F\}, \omega)$
\Ensure $\G= ({\mathcal Q},\Sigma, [Q_0],{\mathcal R},\pi)$
\State Create the axiom $[Q_0]$
\For{each state $U_x \in \States$}
	\State Create the nonterminal $[Q_0Z_0U_x]$ 
	\State Create the rule $[Q_0] \rightarrow [Q_0Z_0U_x]$
\EndFor
\For{each transition $(U_x,\langle x,\alpha,\beta \rangle,V_y)$}
	\If{$\beta=\emptyset$ (pop)}
		\State Create a nonterminal $[U_x\alpha V_y]$ 
		\State Create the rule $r = [U_x\alpha V_y] \rightarrow x$
		\State $\pi(r)\gets \omega(U_x,\langle x,\alpha,\emptyset \rangle,V_y)$
	\EndIf
	\If{$\beta=\alpha$ (conversion transition)}
		\For{each $Q_i\in \States$}
		\State Create nonterminals $[U_x\alpha Q_i]$ and $[V_y\alpha Q_i]$
		\State Create the rule $r=[U_x\alpha Q_i] \rightarrow x[V_y\alpha Q_i]$
		\State $\pi(r)\gets \omega(U_x,\langle x,\alpha,\beta\rangle,V_y)$
		\EndFor
	\EndIf
	\If{$\beta=x\alpha, \ x \in \Gamma$ (push)}
	 	\For{each $(Q_i,Q_j)\in {\States}^2$}
		\State Create nonterminals $[U_x\alpha Q_j]$, $[V_y\alpha Q_i]$ and $[Q_i\alpha Q_j]$
		\State Create the rule $r=[U_x\alpha Q_j] \rightarrow x[V_yxQ_i][Q_i\alpha Q_j]$
		\State $\pi(r)\gets \omega(U_x,\langle x,\alpha,x\alpha\rangle,V_y)$
		\EndFor
	\EndIf
\EndFor
\end{algorithmic}
\end{algorithm}
\noindent{\bf Complexity of Algorithm~\ref{WPDA-WCFG}.} The number of nonterminals is bounded by $O(|\Gamma|\times|\States|^2)$ (as each nonterminal is in the form $[Q_ixQ_j]$ with $Q_i,Q_j\in\States$ and $x\in \Gamma$. The number of production rules is bounded by $O(|\Transitions|\times|\States|^2)$. Thus the worst case complexity of Algorithm \ref{WPDA-WCFG} is bounded by $O(|\delta|\times|\States|^2)$. This corresponds to $O\left(|\alphabet|^5\times|\V|^2\times|\E|\right)$.

\subsubsection{The minimum weight derivation tree}
\label{sec:min_trace}
Generating the minimum weight trace (and then the minimum weight path) requires to build its derivation tree. Let $[X]$ be a nonterminal, we define $\ell([X])$ as the sum of the weights of the productions needed for, starting from $[X]$, deriving a word in $\Alphabet^*$. Thus $\ell([Q_0])$ is the weight of the minimum weight trace.

The function is $\ell :{\{{\mathcal Q}\cup \Sigma \cup \{\epsilon\}\}}^*\rightarrow \mathbb{N}\cup\{\infty\}$ s.t.:
\begin{itemize}
\item if $w = \epsilon$ or $w \in\Alphabet$ then $\ell(w) = 0$,
\item if $w=\alpha_1\dots\alpha_n$ (with $\alpha_i \in \{{\mathcal Q}\cup \Sigma \cup \{\epsilon\}\}$) then $\ell(w) = \sum_{i=1}^n\ell(\alpha_i)$.
\item Let $r_1=[X]\rightarrow \gamma_1,r_2=[X]\rightarrow \gamma_2,\dots,r_k=[X]\rightarrow\gamma_k$ be the set of production rules having $[X]$ as left part. Then $\ell([X])=\min\{\pi(r_1)+\ell(\gamma_1),\dots,\pi(r_k)+\ell(\gamma_k)\}$
\end{itemize}

Knuth's algorithm~\cite{Knuth77} can be adapted to compute the minimum weight derivation tree of a grammar. This corresponds to the weight of $\trace_\P$, where $\P$ is the shortest path to compute. The adapted algorithm maintains a list of production rules and updates the $\ell[X]$ according to the formula above. The sketch of the algorithm is as follows:
\begin{itemize}
\item Initialize $\ell([X])$ to $\infty$ for each nonterminal $[X]$
\item  For each production rule $[X]\rightarrow \alpha_1\dots \alpha_n$ update $\ell([X])$ as follows:
\\$\ell([X]) \gets \min\{\ell([X]),\pi(r)+\sum_{i=1}^n\ell(\alpha_i)\}$
\end{itemize}
The algorithm terminates when all the $\ell[X]$ have the right value and no additional update is possible. Implementing this algorithm with Fibonacci heaps leads to a $O(|\mathcal{Q}|\log|\mathcal{Q}|+\mathcal{|R|})$ complexity~\cite{Tarjan1987}, which corresponds to $O(|\alphabet|^5\times|V|^2\times|\E|)$.

With the correct values of $\ell[X]$, it is trivial to generate the word with the minimum weight derivation tree.

\subsection{Deriving the shortest path from its trace}
\label{sec:word-path}
Algorithm \ref{find-C} is a generalization of an algorithm proposed in~\cite{Lamali2013}.
It takes as input the minimum weight trace $\trace_{\P}$ accepted by $\automaton$ and computes the path $\P$ that matches it\footnote{It is possible that several paths match the trace. In this case the path can be chosen randomly or according to a load-balancing policy.}. 

Algorithm \ref{find-C} starts on $nodes[1]=S$ then checks at each step all the links in $\E$ which match the current letter (protocol) in $\trace_\P$. If $\trace_\P = x_1x_2\dots x_n$ $(x_i \in \alphabet\cup\overline{\alphabet}\cup\underline{\alphabet})$, then at each step $i$, the algorithm starts from each node $U$ in $nodes[i]$ and adds to $links[i]$ all the links $(U,V)$ which match $x_i$. Each $V$ is added in $nodes[i+1]$. The value $weights[(U,V),i]$ is the cost of using link $(U,V)$ at step $i$. It corresponds to the weight $h(U,f_i,V)$ where $f_i$ is the adaptation function used at step $i$. When the trace $\trace_\P$ is completely covered, a classical shortest path algorithm from $S$ to $D$ in the graph $(nodes, links, weights)$ computes the minimum weight path.
\begin{algorithm}
\caption{Computing the shortest path}
\label{find-C}
\begin{algorithmic}
\Require{The network $\N$ and $\trace_\P$}
\Ensure{The shortest path $\P$}
\State $nodes[1] \gets S$ ; $i \gets 2$
\While{The trace is not completely covered}
\For{each $U \in nodes[i]$, $V \in\nodes \ s. t.\ (U,V) \in \E$}
	\If{$x_i\in\alphabet$, $x_i\in Out(U)$, $x_i\in In(V)$ and $(x_{i-1}\rightarrow x_i)\in {\cal CO}(U)$}
		\State Add $(U,V)$ in $links[i]$ and $V$ in $nodes[i+1]$
		\State $weights[(U,V),i]\gets h(U,(x_{i-1}\rightarrow x_i),V)$
	\EndIf
	\If{$x_i\in\overline{\alphabet}$, $x_i\in Out(U)$, $x_i\in In(V)$ and $(x_{i-1}\rightarrow x_{i-1}x_i)\in {\cal EN}(U)$}
		\State Add $(U,V)$ in $links[i]$ and $V$ in $nodes[i+1]$
		\State $weights[(U,V),i]\gets h(U,(x_{i-1}\rightarrow x_{i-1}x_i),V)$
	\EndIf
	\If{$x_i\in\underline{\alphabet}$, $x_i\in Out(U)$, $x_i\in In(V)$ and $\overline{(x_i\rightarrow x_ix_{i-1})}\in {\cal DE}(U)$}
		\State Add $(U,V)$ in $links[i]$ and $V$ in $nodes[i+1]$
		\State $weights[(U,V),i]\gets h(U,\overline{(x_i\rightarrow x_i x_{i-1})},V)$
	\EndIf
\EndFor
\State $i++$
\EndWhile
\State Compute The shortest path from $S$ to $D$ in $(nodes, links)$
\end{algorithmic}
\end{algorithm}

\noindent{\bf Complexity of Algorithm \ref{find-C}.} The complexity of Algorithm \ref{find-C} is bounded by $O(|\trace_\P|\times|{\mathcal V}|\times|\E|)$ in the worst case.

\section{Proof that SYM-HAM is $\mathsf{NP}$-complete}
\label{app:np-complete}
{\bf Problem SYM-HAM.} Given a directed symmetric graph $\Gg=(\V,\E)$ and a pair of nodes $(S,D)$, is there a Hamiltonian path from $S$ to $D$ in $\Gg$?

\begin{proposition}
SYM-HAM is $\mathsf{NP}$-complete.
\end{proposition}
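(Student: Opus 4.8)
The plan is to show $\text{SYM-HAM}\in\mathsf{NP}$ and then prove $\mathsf{NP}$-hardness by reducing from the undirected Hamiltonian cycle problem, one of Karp's classical $\mathsf{NP}$-complete problems. Membership is immediate: a candidate solution is a sequence of nodes of size $O(|\V|)$, and one checks in polynomial time that it starts at $S$, ends at $D$, lists every node of $\V$ exactly once, and uses only arcs present in $\E$. The reason to reduce from \emph{undirected} Hamiltonian cycle rather than from directed Hamiltonian path is that a symmetric directed graph is, after replacing each unordered edge $\{U,V\}$ by the opposite arcs $(U,V)$ and $(V,U)$, exactly an undirected graph; so the entire reduction can be carried out in the undirected setting and only reinterpreted as symmetric-directed at the very end, avoiding any orientation gadgets.

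For the construction, let $G$ be an undirected graph on which we wish to decide the existence of a Hamiltonian cycle, and fix an arbitrary vertex $v$ of $G$. I would build a graph $G'$ by (i) adding a \emph{twin} $v'$ of $v$, namely a new vertex adjacent to exactly the neighbours of $v$, and (ii) adding two pendant vertices $S$ and $D$, with $S$ adjacent only to $v$ and $D$ adjacent only to $v'$. Reading each edge of $G'$ as a pair of opposite arcs turns $G'$ into a symmetric directed graph $\Hh$, and together with the pair $(S,D)$ this is the SYM-HAM instance output by the reduction. The construction adds only three vertices and $O(\deg(v))$ edges, hence is polynomial.

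The correctness claim is that $G$ has a Hamiltonian cycle if and only if $\Hh$ admits a Hamiltonian path from $S$ to $D$. Since $S$ and $D$ each have degree one, any Hamiltonian path of $\Hh$ must begin with $S\,v$ and end with $v'\,D$; deleting these two end arcs leaves a Hamiltonian path of $G'\setminus\{S,D\}$ from $v$ to $v'$. As $v'$ has the same neighbourhood as $v$, the last arc entering $v'$ comes from some neighbour $w$ of $v$ in $G$, so identifying $v'$ with $v$ closes the path into a Hamiltonian cycle of $G$ through the edge $\{w,v\}$. Conversely, a Hamiltonian cycle of $G$ can be cut at $v$, its incident cycle-edge $\{w,v\}$ rerouted to the twin as $\{w,v'\}$, and the two ends prolonged to $S$ and $D$, producing the required $S$--$D$ Hamiltonian path $S\,v\dots w\,v'\,D$ in $\Hh$. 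I expect the only delicate point to be this twin argument: one must verify that the neighbour of $v'$ used by the path is a \emph{genuine} neighbour of $v$ in $G$ (which holds precisely because $v'$ was given $v$'s neighbourhood), so that closing up at $v/v'$ yields a valid cycle with no repeated vertex, and symmetrically that the split does not reuse an edge. Checking both directions of this equivalence, rather than the routine $\mathsf{NP}$-membership or the polynomiality of the construction, is the heart of the proof.
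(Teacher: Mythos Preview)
Your argument is correct. The twin-plus-pendant construction faithfully converts an undirected Hamiltonian cycle in $G$ into an $S$--$D$ Hamiltonian path in the symmetric digraph $\Hh$, and the backward direction is handled by exactly the identification you describe; the only subtlety, that the predecessor $w$ of $v'$ is a neighbour of $v$ in $G$ and is distinct from $v$, follows from the definition of twin in a simple graph.

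The paper takes a strictly simpler route: it reduces from the \emph{undirected Hamiltonian path} problem with specified endpoints $(S',D')$. Since, as you yourself observe, a symmetric directed graph is just an undirected graph with each edge read as a pair of opposite arcs, the reduction is the identity map on the underlying graph, and the equivalence of the two instances is immediate. Your approach trades this triviality for a slightly more elaborate gadget in order to start from the more canonical Hamiltonian \emph{cycle} problem. Both are valid; the paper's version is shorter because it picks a source problem that already matches the target exactly (path with fixed endpoints), while yours has the minor expository advantage of invoking one of Karp's original problems directly.
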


\begin{proof}
First, it is clear that SYM-HAM is in $\mathsf{NP}$. Thus, we prove its $\mathsf{NP}$-hardness by providing a polynomial reduction from the Hamiltonian path problem in \emph{undirected} graphs to SYM-HAM. Consider an undirected graph $\Hh=(\V',\E')$ and a pair of nodes $(S',D')$. It is $\mathsf{NP}$-complete to know whether there is an undirected Hamiltonian path between $S'$ and $D'$. The reduction builds an instance of SYM-HAM as follows:
 A graph $\Gg=(\V,\E)$ where $\V=\V'$. For each undirected edge $(U,V)$ in $\Hh$, create the directed links $(U,V)$ and $(V,U)$ in $\Gg$.

Let $\P'=SU_1U_2\dots U_nD$ be a Hamiltonian path in $\Hh$. For each edge $(U_i,U_{i+1})$ in $\Hh$, one can take the corresponding directed link $(U_i,U_{i+1})$ in $\Gg$ and construct a Hamiltonian path in $\Gg$.

Now let $\P'=SU_1U_2\dots U_nD$ be a (directed) Hamiltonian path in $\Gg$. By replacing each link $(U_i,U_{i+1})$ by the corresponding undirected edge (in $\Hh$), one obtains a path visiting all the nodes exactly once in $\Hh$ (as $\Gg$ and $\Hh$ have the same set of nodes). Thus, the obtained path is a Hamiltonian path in $\Hh$.

So $\Hh$ admits an undirected Hamiltonian path between $S'$ and $D'$ if and only if $\Gg$ admits a directed Hamiltonian path from $S$ to $D$.
\end{proof}


\end{document}